\def\N{\mathbb{N}}
\def\R{\mathbb{R}}
\def\uvp{\mathfrak{u}} 
\def\coll{\mathscr{C}}
\def\dnod{d_{\rm nod}}
\def\dmin{d_{\rm min}}
\def\deltanod{\delta_{\rm nod}}
\def\rpiu{r_{\scaleto{+}{5pt}}}
\def\rmeno{r_{\scaleto{-}{5pt}}}
\def\rpiup{r_{\scaleto{+}{5pt}}'}
\def\rmenop{r_{\scaleto{-}{5pt}}'}
\def\uppintom{u^\omega_{\rm int}}
\def\uppextom{u^\omega_{\rm ext}}
\def\upplinkom{u^\omega_{\rm link}}
\def\omegamutuno{\omega_1^{\scaleto{M}{3pt}}}
\def\omegamutdue{\omega_2^{\scaleto{M}{3pt}}}
\def\C{\mathbb{C}}
\newtheorem{proposition}{\bf Proposition}
\newtheorem{remark}{\bf Remark}
\newcolumntype{d}[1]{D{.}{.}{#1}}
\renewcommand{\arraystretch}{1.1}
\title{Revisiting the computation of the critical points of the
  Keplerian distance}
\author{Giovanni F. Gronchi, Giulio Ba\`u, Clara Grassi}
\begin{document}
\maketitle

\begin{abstract}
We consider the Keplerian distance $d$ in the case of two elliptic
orbits, i.e. the distance between one point on the first ellipse and
one point on the second one, assuming they have a common focus. The
absolute minimum $\dmin$ of this function, called MOID or orbit
distance in the literature, is relevant to detect possible impacts
between two objects following approximately these elliptic
trajectories.  We revisit and compare two different approaches to
compute the critical points of $d^2$, where we squared the distance
$d$ to include crossing points among the critical ones. One approach
uses trigonometric polynomials, the other uses ordinary polynomials. A
new way to test the reliability of the computation of $\dmin$ is
introduced, based on optimal estimates that can be found in the
literature.  The planar case is also discussed: in this case we
present an estimate for the maximal number of critical points of
$d^2$, together with a conjecture supported by numerical tests.
\end{abstract}

\section{Introduction}

The distance $d$ between two points on two Keplerian orbits with a
common focus, that we call {\em Keplerian distance}, appears in a
natural way in Celestial Mechanics.  The absolute minimum of $d$ is
called MOID (minimum orbital intersection distance), or simply {\em
  orbit distance} in the literature, and we denote it by $\dmin$.  It
is important to be able to track $\dmin$, and actually all the local
minimum points of $d$, to detect possible impacts between two
celestial bodies following approximately these trajectories, e.g. an
asteroid with the Earth \cite{milaniAstIII, milaniLoV2005}, or two
Earth satellites \cite{rossi_2020}.  Moreover, the information given
by $\dmin$ is useful to understand observational biases in the
distribution of the known population of NEAs, see \cite{gv13}. Because
of the growing number of Earth satellites (e.g. the mega
constellations of satellites that are going to be launched
\cite{Arroyo2021}) and discovered asteroids, fast and reliable methods
to compute the minimum values of $d$ are required.

The computation of the minimum points of $d$ can be performed by
searching for all the critical points of $d^2$, where considering the
squared distance allows us to include trajectory-crossing points in
the results.

There are several papers in the literature concerning the computation
of the critical points of $d^2$, e.g. \cite{sitarski1968,
  dybczyetal1986, kholvass1999, gronchi2002, gronchi2005,
  balkhol2005}.
%
%

Some authors also propose methods for a fast computation of $\dmin$
only, e.g. \cite{wisrick13, hedoetal2018}.


We will focus on an algebraic approach
for the case of two elliptic trajectories, as in \cite{kholvass1999},
\cite{gronchi2002}.
In \cite{kholvass1999} the critical points of $d^2$ are found by
computing the roots of a trigonometric polynomial $g(u)$ of degree 8,
where $u$ is the eccentric anomaly parametrizing one of the
trajectories. The polynomial $g(u)$ is obtained by the computation of
a Groebner basis, implying that generically we can not solve this
problem by a polynomial with a smaller degree.
In \cite{gronchi2002}, resultant theory is applied to a system of two
bivariate ordinary polynomials, together with the discrete Fourier
transform, to obtain (generically) a univariate polynomial of degree
20 in a variable $t$, with a factor $(1+t^2)^2$ leading to 4 pure
imaginary roots that are discarded, so that we may have at most 16
real roots.
Note that the trigonometric polynomial $g(u)$ of degree 8 corresponds
to an ordinary polynomial of degree 16 in the variable $t$ through the
transformation $t=\tan(u/2)$.
These methods were extended to the case of unbounded conics with a
common focus in \cite{balkhol2005}, \cite{gronchi2005}.

In this paper we revisit the computation of the critical points of
$d^2$ for two elliptic trajectories by applying resultant theory to
polynomial systems written in terms of the eccentric or the true
anomalies. We obtain different methods using either ordinary or
trigonometric polynomials. Moreover, we are able to compute via
resultant theory the 8-th degree trigonometric polynomial $g(u)$ found
by \cite{kholvass1999}, and its analogue using the true anomalies (see
Sections~\ref{s:ecc_anom}, \ref{s:true_anom}).  Some numerical tests
comparing these methods are presented.
We also test the reliability of the methods by taking advantage of the
estimates for the values of $\dmin$ introduced in \cite{gv13} when one
trajectory is circular. For the case of two ellipses, since we do not
have such estimates for $\dmin$, we use the optimal bounds for the
nodal distance $\deltanod$ derived in \cite{gn20}.

After introducing some notation in Section \ref{s:prelim}, we deal
with the problem using eccentric anomalies and ordinary polynomials in
Section \ref{s:ecc_reg}. In Sections \ref{s:ecc_anom} and
\ref{s:true_anom} we describe other procedures
employing trigonometric polynomials and, respectively, eccentric or
true anomalies. Some numerical tests and the reliability of our
computations are discussed in Section \ref{s:num_tests}. Finally, we
present results for the maximum number of critical points in the
planar problem in Section \ref{s:planar}, and draw some conclusions in
Section \ref{s:conclusions}. Additional details of the computations
can be found in the Appendix.

\section{Preliminaries}
\label{s:prelim}

Let ${\cal E}_1$ and ${\cal E}_2$ be two confocal elliptic
trajectories, with ${\cal E}_i$ defined by the five Keplerian orbital
elements $a_i,e_i,i_i,\Omega_i,\omega_i$.  We introduce the Keplerian
distance
\begin{equation}
  d(V) = \sqrt{\langle {\cal X}_1-{\cal X}_2,{\cal X}_1-{\cal X}_2\rangle},
  \label{kepdistance}
\end{equation}
where ${\cal X}_1,{\cal X}_2\in\R^3$ are the Cartesian coordinates of
a point on ${\cal E}_1$ and a point on ${\cal E}_2$, corresponding to
the vector $V=(v_1,v_2)$, where $v_i$ is a parameter along the
trajectory ${\cal E}_i$.  In this paper we will parametrize the orbits
either with the eccentric anomalies $u_i$ or with the true anomalies
$f_i$.

Let $(x_1,y_1)$ and $(x_2,y_2)$ be Cartesian coordinates of two points
on the two trajectories, each in its respective plane. The origin for
both coordinate systems is the common focus of the two ellipses. We
can write
\begin{align*}
  {\mathcal X}_1 &= x_1\,{\mathcal P} + y_1\,{\mathcal Q},\\
  {\mathcal X}_2 &= x_2\,{\mathfrak p} + y_2\,{\mathfrak q},
\end{align*}
with
\begin{align*}
  {\mathcal P} &= \left(P_x\,,P_y\,,P_z\right),
  &
  {\mathcal Q} &= \left(Q_x\,,Q_y\,,Q_z\right),\\
  {\mathfrak p} &= \left(p_x\,,p_y\,,p_z\right),
  &
  {\mathfrak q} &= \left(q_x\,,q_y\,,q_z\right),
\end{align*}
where
\begin{eqnarray*}
  &&P_x = \cos\omega_1\cos\Omega_1 - \cos
  i_1\sin\omega_1\sin\Omega_1,\\
  &&P_y = \cos\omega_1\sin\Omega_1 + \cos
  i_1\sin\omega_1\cos\Omega_1,\\
  &&P_z =\sin\omega_1\sin i_1,\\
  &&Q_x = -\sin\omega_1\cos\Omega_1 - \cos
  i_1\cos\omega_1\sin\Omega_1,\\
  &&Q_y = -\sin\omega_1\sin\Omega_1 + \cos
  i_1\cos\omega_1\cos\Omega_1,\\
  &&Q_z =\cos\omega_1 \sin i_1,\\
  &&p_x = \cos\omega_2 \cos\Omega_2 - \cos
  i_2\sin\omega_2\sin\Omega_2,\\
  &&p_y = \cos\omega_2 \sin\Omega_2 + \cos
  i_2\sin\omega_2\cos\Omega_2,\\
  &&p_z = \sin\omega_2 \sin i_2,\\
  &&q_x = -\sin\omega_2 \cos\Omega_2 - \cos
  i_2\cos\omega_2\sin\Omega_2,\\
  &&q_y = -\sin\omega_2 \sin\Omega_2 + \cos
  i_2\cos\omega_2\cos\Omega_2,\\
  &&q_z = \cos\omega_2 \sin i_2 .
\end{eqnarray*}

\noindent If we use the eccentric anomalies $u_i$ we have
\[
x_i  = a_i(\cos u_i - e_i),\qquad y_i = a_i\sqrt{1-e_i^2}\sin u_i,
\]
for $i=1,2$, while with the true anomalies $f_i$ we have
\[
x_i = r_i\cos f_i,\qquad y_i = r_i\sin f_i,\qquad r_i = \frac{a_i(1-e_i^2)}{1+e_i\cos f_i}.
\]
Note that
\[
|\mathcal{P}| = |\mathcal{Q}| = |\mathfrak{p}| = |\mathfrak{q}| = 1,
\qquad
\langle\mathcal{P},\mathcal{Q}\rangle =
\langle\mathfrak{p},\mathfrak{q}\rangle = 0,
\]
and set
\[
K = \langle {\mathcal P},{\mathfrak p}\rangle,
\hskip 0.5cm
L = \langle {\mathcal Q},{\mathfrak p}\rangle,
\hskip 0.5cm
M = \langle {\mathcal P},{\mathfrak q}\rangle,
\hskip 0.5cm
N = \langle {\mathcal Q},{\mathfrak q}\rangle .
\]

\section{Eccentric anomalies and ordinary polynomials}
\label{s:ecc_reg}

We look for the critical points of the squared distance $d^2$ as a
function of the eccentric anomalies $u_1, u_2$, that is we consider
the system
\begin{equation}
  \nabla d^2(u_1,u_2) = \mathbf{0},
  \label{grad}
\end{equation}
where $\nabla d^2 = \bigl(\frac{\partial d^2}{\partial u_1},
\frac{\partial d^2}{\partial u_2}\bigr)$. We can write
\begin{equation}
  \left\{
  \begin{split}
    &\frac{1}{2}\frac{\partial d^2}{\partial u_1} = \langle\frac{\partial {\cal X}_1}{\partial u_1},{\cal X}_1-{\cal X}_2\rangle
    = \frac{\partial x_1}{\partial u_1}(x_1 - Kx_2 - My_2) + \frac{\partial y_1}{\partial u_1}(y_1 - Lx_2 - Ny_2),\\
    &\frac{1}{2}\frac{\partial d^2}{\partial u_2} = -\langle\frac{\partial {\cal X}_2}{\partial u_2},{\cal X}_1-{\cal X}_2\rangle
    = \frac{\partial x_2}{\partial u_2}(x_2 - Kx_1 - Ly_1) + \frac{\partial y_2}{\partial u_2}(y_2 - Mx_1 - Ny_1),
  \end{split}
  \right.
  \label{critpts_eccanom}
\end{equation}
where
\[
\frac{\partial x_i}{\partial u_i} = -a_i\sin u_i,\qquad
\frac{\partial y_i}{\partial u_i} = a_i\sqrt{1-e_i^2}\cos u_i,\hskip 1cm i=1,2.
\]

\noindent System \eqref{critpts_eccanom} can be written as
\begin{equation}
  \left\{
  \begin{split}
    & 2(A_1 - A_3)\sin u_1\cos u_1 + A_7\cos u_1\sin u_2 + A_8\cos u_1\cos u_2\\
    & \quad- A_9\sin u_1\sin u_2 - A_{10}\sin u_1\cos u_2 + A_{11}\cos u_1 - A_{12}\sin u_1 = 0,\\
    & 2(A_4 - A_6)\sin u_2\cos u_2 + A_7\sin u_1 \cos u_2 - A_8\sin u_1\sin u_2\\
    & \quad+ A_9\cos u_1\cos u_2 -  A_{10}\cos u_1\sin u_2 + A_{13}\cos u_2 - A_{14}\sin u_2 = 0, 
    \label{ecc_gradient}
  \end{split}
  \right.
\end{equation}
with
\[
\begin{aligned}
  & A_1 = a_1^2(1-e_1^2), &\quad &A_2 = 0,\\
  & A_3 = a_1^2, &\quad &A_4 = a_2^2(1-e_2^2),\\
  & A_5 = 0, &\quad &A_6 = a_2^2,\\ 
  & A_7 = -2a_1a_2\sqrt{1-e_1^2}\sqrt{1-e_2^2} N, &\quad &A_8 = -2a_1a_2\sqrt{1-e_1^2} L,\\
  & A_9 = -2a_1a_2\sqrt{1-e_2^2} M, &\quad &A_{10} = -2a_1a_2 K,\\
  & A_{11} = 2a_1a_2e_2\sqrt{1-e_1^2} L, &\quad &A_{12} = 2a_1(a_2e_2 K - a_1e_1),\\
  & A_{13} = 2a_1a_2e_1\sqrt{1-e_2^2} M, &\quad &A_{14} = 2a_2(a_1e_1 K - a_2e_2),\\
  & A_{15} = a_1^2 e_1^2 + a_2^2 e_2^2 - 2a_1a_2e_1e_2 K. & &
\end{aligned}
\]
Following \cite{gronchi2002}, we can transform \eqref{ecc_gradient}
into a system of two bivariate ordinary polynomials in the variables
$t,s$ through 
\begin{equation*}
  \sin u_1 = \displaystyle{2t\over 1+t^2},\qquad
  \cos u_1 = \displaystyle{1-t^2 \over 1+t^2},\qquad
  \sin u_2 = \displaystyle{2s\over 1+s^2},\qquad
  \cos u_2 = \displaystyle{1-s^2 \over 1+s^2}.
  \label{param}
\end{equation*}
Then, \eqref{ecc_gradient} becomes
\begin{equation}
  \left\{
  \begin{split}
    & p(t,s) = \alpha (t) s^2 + \beta (t) s + \gamma (t) = 0,\\
    & q(t,s) = A (t) s^4 + B (t) s^3 + D (t) s - A (t) = 0,
    \label{polysys}
  \end{split}
  \right.
\end{equation}
where
\begin{align*}
  \alpha(t) &= (A_{11}-A_8) + (4A_1 -4A_3 + 2A_{10} -2A_{12})t\\
  &\,\quad +(-4A_1 +4A_3 +2A_{10} -2A_{12})t^3 +(A_8-A_{11})t^4,\\
  \beta(t) &= 2A_7 -4A_9t -4A_9t^3 -2A_7t^4,\\
  \gamma(t) &= (A_{11}+A_8) + (4A_1 -4A_3 - 2A_{10} -2A_{12})t\\
  &\,\quad +(-4A_1 +4A_3 -2A_{10} -2A_{12})t^3-(A_8+A_{11})t^4
\end{align*}
and
\begin{align*}
  A(t) &= -(A_9 + A_{13}) - 2A_7t + (A_9 - A_{13})t^2,\\
  B(t) &= (-4A_4 + 4A_6 - 2A_{10} - 2A_{14}) - 4A_8t\\
  &\,\quad +(-4A_4 + 4A_6 + 2A_{10} - 2A_{14})t^2,\\
  D(t) &= (4A_4 - 4A_6 - 2A_{10} - 2A_{14}) - 4A_8t\\
  &\,\quad+(4A_4 - 4A_6 + 2A_{10} - 2A_{14})t^2.
\end{align*}
Let
\begin{equation}
  S_0 = \left(
  \begin{array}{ccccrr}
    \alpha  &0       &0      &0       &A   &0   \\
    \beta   &\alpha  &0      &0       &B   &A   \\
    \gamma  &\beta   &\alpha &0       &0   &B   \\
    0       &\gamma  &\beta  &\alpha  &D   &0   \\
    0       &0       &\gamma &\beta   &-A  &D  \\
    0       &0       &0      &\gamma  &0   &-A 
  \end{array}
  \right ) 
\end{equation}
be the Sylvester matrix related to \eqref{polysys}. From resultant
theory \cite{cox92} we know that the complex roots of $\det(S_0(t))$
correspond to all the $t$-components of the solutions $(t, s)\in\C^2$
of \eqref{polysys}.
The determinant $\det(S_0(t))$ is in general a polynomial of degree 20
in $t$. We notice that it can be factorized as
\[
\det\bigl(S_0(t)\bigr) = (1+t^2)^2\det\bigl(\hat{S}(t)\bigr)
\]
with
\begin{equation}
  \hat{S} = \left (
  \begin{array}{cccccc}
    \widetilde{\sigma}_1  & -\widetilde{\sigma}_2  & -\widetilde{\sigma}_1  & \phantom{-}\widetilde{\sigma}_2  & 0  & -\widetilde{\sigma}_3\\
    \widetilde{\sigma}_2  &\phantom{-}\widetilde{\sigma}_1  & -\widetilde{\sigma}_2  & -\widetilde{\sigma}_1  & \phantom{-}\widetilde{\sigma}_3  & 0\\
    \sigma_4  & \phantom{-}\sigma_2  & \phantom{-}\sigma_1  & -\sigma_2  & \phantom{-}\sigma_6  & \phantom{-}\sigma_3\\
    0  & \phantom{-}\sigma_4  & \phantom{-}\sigma_2  & \phantom{-}\sigma_1  & \phantom{-}\sigma_5  & \phantom{-}\sigma_6\\
    0  & 0  & \phantom{-}\sigma_4  & \phantom{-}\sigma_2  & -\sigma_6  & \phantom{-}\sigma_5\\
    0  & 0  & 0  & \phantom{-}\sigma_4   & 0  & -\sigma_6 
  \end{array}
  \right ),
\end{equation}
where
\begin{equation}
  \begin{aligned}
    & \sigma_1 = \alpha-\gamma,\quad
    & \sigma_2 & = \beta,\quad
    & \sigma_3 & = B-D,\cr
    & \widetilde{\sigma}_1 = \frac{\alpha-\gamma}{1+t^2},\quad
    & \widetilde{\sigma}_2 &= \frac{\beta}{1+t^2},\quad
    & \widetilde{\sigma}_3 &= \frac{B-D}{1+t^2},\cr
    & \sigma_4 = \gamma,\quad
    & \sigma_5 &= D,\quad
    & \sigma_6 &= A.\cr
  \end{aligned}
\end{equation}
Therefore, to find the $t$-components corresponding to the critical
points, we can look for the solutions of $\det(\hat S)=0$, which in
general is a polynomial equation of degree 16. We can follow the same
steps explained in \cite[Sect. 4.3]{gronchi2005} to obtain the
coefficients of the polynomial $\det(\hat S)$ by an
evaluation/interpolation procedure based on the discrete Fourier
transform. Then, the method described in \cite{bini} is applied to
compute its roots. We substitute each of the real roots $t$ of
$\det(\hat S)$ in \eqref{polysys} and use the first equation
$p(t,s)=0$ to compute the two possible values of the $s$
variable. Finally, we evaluate $q$ at these points $(t,s)$ and choose
the value of $s$ that gives the evaluation with the smallest absolute
value.

\subsection{Angular shifts}
\label{s:angshift_ecc}
Define a shifted angle $v_1=u_1-s_1$ and let
\begin{equation}
  \sin v_1 = \frac{2z}{1+z^2},\qquad \cos v_1 = \frac{1-z^2}{1+z^2}.
  \label{zparam}
\end{equation}
Then, system \eqref{ecc_gradient} becomes
\begin{equation}
  \left\{
  \begin{split}
    &\tilde p(z,s) = \tilde\alpha (z) s^2 + \tilde\beta (z) s + \tilde\gamma (z) = 0,\\
    &\tilde q(z,s) = \tilde A (z) s^4 + \tilde B (z) s^3 + \tilde D (z) s - \tilde A (z) = 0.
    \label{polysys_shift}
  \end{split}
  \right.
\end{equation}
The coefficients $\tilde\alpha, \tilde\beta, \tilde\gamma, \tilde A,
\tilde B, \tilde D$ are written in Appendix \ref{a:shift_ord}. If
$T_0(z)$ is the Sylvester matrix related to \eqref{polysys_shift} we
get
\[
\det(T_0(z)) = (1+z^2)^2\det(\hat T(z)),
\]
with
\begin{equation*}
  \hat{T} = \left(
  \begin{array}{cccccc}
    \widetilde{\tau}_1  & -\widetilde{\tau}_2  & -\widetilde{\tau}_1  & \phantom{-}\widetilde{\tau}_2  & 0  & -\widetilde{\tau}_3\\
    \widetilde{\tau}_2  & \phantom{-}\widetilde{\tau}_1  & -\widetilde{\tau}_2  & -\widetilde{\tau}_1  & \phantom{-}\widetilde{\tau}_3  & 0\\
    \tau_4  & \phantom{-}\tau_2  & \phantom{-}\tau_1  & -\tau_2  & \phantom{-}\tau_6  & \phantom{-}\tau_3\\
    0  & \phantom{-}\tau_4  & \phantom{-}\tau_2  & \phantom{-}\tau_1  & \phantom{-}\tau_5  & \phantom{-}\tau_6\\
    0  & 0  & \phantom{-}\tau_4  & \phantom{-}\tau_2  & -\tau_6  & \phantom{-}\tau_5 \\
    0  & 0  & 0  & \phantom{-}\tau_4  & 0  & -\tau_6 
  \end{array}
  \right),
\end{equation*}
where
\begin{equation}
  \begin{aligned}
    & \tau_1 = \tilde\alpha-\tilde\gamma,\quad
    & \tau_2 &= \tilde\beta,\quad
    & \tau_3 &= \tilde B-\tilde D,\cr
    & \widetilde{\tau}_1 = \frac{\tilde\alpha-\tilde\gamma}{1+z^2},\quad
    & \widetilde{\tau}_2 &= \frac{\tilde\beta}{1+z^2},\quad
    & \widetilde{\tau}_3 &= \frac{\tilde B-\tilde D}{1+z^2},\cr
    & \tau_4 = \tilde\gamma,\quad
    & \tau_5 &= \tilde D,\quad
    & \tau_6 &= \tilde A.\cr
    \end{aligned}
\end{equation}
We find the values of $z$ by solving the polynomial equation
$\det(\hat T)=0$, which again has generically degree 16. We compute
the values of $v_1$ from \eqref{zparam} and shift back to obtain the
$u_1$ components of the critical points. Substituting in
\eqref{ecc_gradient} and applying the angular shift $u_2=v_2+s_2$, we
consider the system
\[
\left\{
\begin{split}
  & \mathsf{A}\cos v_2 + \mathsf{B}\sin v_2 + \mathsf{C} = 0,\\
  & \cos^2v_2 + \sin^2v_2 -1 = 0,
\end{split}
\right.
\]
where the first equation corresponds to the first equation in
\eqref{ecc_gradient}, and
\[
\begin{aligned}
  & \mathsf{A} = (A_8\cos u_1-A_{10}\sin u_1)\cos s_2 + (A_7\cos u_1-A_9\sin u_1)\sin s_2,\\
  & \mathsf{B} = -(A_8\cos u_1-A_{10}\sin u_1)\sin s_2 + (A_7\cos u_1-A_9\sin u_1)\cos s_2,\\
  & \mathsf{C} = 2(A_1-A_3)\sin u_1\cos u_1+A_{11}\cos u_1-A_{12}\sin u_1.
\end{aligned}
\]
For each value of $u_1$ we compute two solutions for $\cos v_2$, $\sin
v_2$ and the corresponding values of $\cos u_2$, $\sin u_2$. We choose
between them by substituting in the second equation in
\eqref{ecc_gradient}.

\section{Eccentric anomalies and trigonometric polynomials}
\label{s:ecc_anom}

To work with trigonometric polynomials, we write system
\eqref{critpts_eccanom} as
\begin{equation}
  \left\{
  \begin{split}
    & \lambda\sin u_1\cos u_1 + \mu\cos u_1 + \nu\sin u_1 = 0,\cr
    &\alpha\cos u_1 + \beta\sin u_1 + \gamma = 0,\cr
  \end{split}
  \right.
  \label{u_critpts}
\end{equation}
where
\[
\begin{split}
  \lambda &=   a_1e_1^2,\cr
  \mu &= a_2\sqrt{1-e_1^2}\Bigl(\sqrt{1-e_2^2}N\sin u_2 + L\cos u_2 - e_2  L\Bigr),\cr
  \nu &= a_2e_2K - a_1e_1 - a_2\sqrt{1-e_2^2}M\sin u_2 - a_2K\cos u_2,\cr
  \alpha &= a_1\Bigl(\sqrt{1-e_2^2}M\cos u_2 - K\sin u_2\Bigr),\cr
  \beta &= a_1\sqrt{1-e_1^2}\Bigl(\sqrt{1-e_2^2}N\cos u_2 - L\sin u_2\Bigr),\cr
  \gamma &= a_2e_2^2\sin u_2\cos u_2 - a_1e_1\sqrt{1-e_2^2}M\cos u_2 + (a_1e_1K - a_2e_2)\sin u_2.\cr
\end{split}
\]
Inserting relation
\begin{equation}
  \sin u_1 = -\frac{1}{\beta}(\alpha\cos u_1 + \gamma)
  \label{eqforsinu1}
\end{equation}
into $\cos^2u_1+\sin^2u_1-1 = 0$ and into the first equation in
\eqref{u_critpts}, we obtain
\begin{equation}
  \left\{
  \begin{split}
    & (\alpha^2+\beta^2)\cos^2 u_1 + 2\alpha\gamma\cos u_1 + \gamma^2-\beta^2 = 0,\cr
    & - \alpha\lambda\cos^2 u_1 + (\beta\mu - \lambda\gamma-\alpha\nu)\cos u_1 -\gamma\nu = 0.\cr
  \end{split}
  \right.
  \label{cosUeqs}
\end{equation}
We call $p_1$, $p_2$ the two trigonometric polynomials appearing on
the left-hand side of \eqref{cosUeqs}.
The Sylvester matrix of $p_1$ and $p_2$ is
\[
\mathscr{S} = \left[
  \begin{array}{cccc}
    \alpha^2+\beta^2  & 0  & -\alpha\lambda  & 0\cr
    2\alpha\gamma  & \alpha^2+\beta^2  & \beta\mu - \lambda\gamma-\alpha\nu  & -\alpha\lambda\cr
    \gamma^2-\beta^2  & 2\alpha\gamma  & -\gamma\nu  & \beta\mu-\lambda\gamma-\alpha\nu\cr
    0  &\gamma^2-\beta^2  & 0  &-\gamma\nu\cr
  \end{array}
  \right].
\]
We define
\[
\mathscr{G}(u_2) = \det\mathscr{S}(u_2),
\]
which corresponds to the resultant of $p_1$, $p_2$ with respect to
$\cos u_1$ and is a trigonometric polynomial in $u_2$ only. The $u_2$
component of each critical point satisfies $\mathscr{G}(u_2)=0$.

\begin{proposition}
  \label{prop:ea_factor}
  We can extract a factor $\beta^2$ from $\det\mathscr{S}$.
\end{proposition}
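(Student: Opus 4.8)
The plan is to regard $\alpha,\beta,\gamma,\lambda,\mu,\nu$ as independent indeterminates over $\mathbb{Q}$ and to prove that $\beta^2$ divides $\det\mathscr{S}$ inside the unique factorization domain $\mathbb{Q}[\alpha,\gamma,\lambda,\mu,\nu][\beta]$, in which $\beta$ is prime; since $\alpha,\dots,\nu$ are then specialised to trigonometric polynomials in $u_2$, the claim follows by substitution. The first observation is that, writing $P := \alpha\cos u_1 + \gamma$, the polynomials defined through \eqref{cosUeqs} factor as
\begin{equation*}
p_1 = P^2 + \beta^2(\cos^2 u_1 - 1), \qquad p_2 = -P\,(\lambda\cos u_1 + \nu) + \beta\mu\cos u_1 ,
\end{equation*}
as one checks by direct expansion. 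In particular, modulo $\beta^2$ the first polynomial collapses to the \emph{square} of the linear form $P$, while $p_2$ is unaffected, its coefficients being at most linear in $\beta$.

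I would then work in the quotient ring $R = \mathbb{Q}[\alpha,\gamma,\lambda,\mu,\nu][\beta]/(\beta^2)$. Since $\det\mathscr{S}$ is, up to sign, the resultant $\mathrm{Res}_{\cos u_1}(p_1,p_2)$, and a Sylvester determinant depends polynomially on the coefficients, its image in $R$ equals $\mathrm{Res}_{\cos u_1}(P^2,p_2)$. Now I invoke the multiplicativity of the resultant in its first slot --- a polynomial identity in the coefficients, hence valid over any commutative ring, with $P^2$ read as a polynomial of degree $2$ --- to get $\mathrm{Res}_{\cos u_1}(P^2,p_2) = \mathrm{Res}_{\cos u_1}(P,p_2)^2$. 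Since $P$ is of degree one in $\cos u_1$, the resultant $\mathrm{Res}_{\cos u_1}(P,p_2)$ is a $3\times 3$ Sylvester determinant, and using $p_2 = -P(\lambda\cos u_1+\nu)+\beta\mu\cos u_1$ a short computation gives $\mathrm{Res}_{\cos u_1}(P,p_2) = -\alpha\beta\gamma\mu$, manifestly a multiple of $\beta$. Squaring, $\mathrm{Res}_{\cos u_1}(P^2,p_2) = \alpha^2\beta^2\gamma^2\mu^2 = 0$ in $R$.

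Hence $\det\mathscr{S} \equiv 0 \pmod{\beta^2}$ in $\mathbb{Q}[\alpha,\gamma,\lambda,\mu,\nu][\beta]$, so $\beta^2$ divides $\det\mathscr{S}$ as claimed, and after substituting the trigonometric expressions for $\alpha,\dots,\nu$ the quotient $\det\mathscr{S}/\beta^2$ is again a trigonometric polynomial in $u_2$. I expect the main point to be the passage from first- to second-order vanishing: the fact that $\beta$ divides $\det\mathscr{S}$ is immediate, because modulo $\beta$ the polynomials $p_1$ and $p_2$ already share the common factor $P$; the extra power of $\beta$ is accounted for precisely by the two facts that modulo $\beta^2$ the polynomial $p_1$ becomes the \emph{square} $P^2$ (so that multiplicativity introduces a square into the resultant) and that $\mathrm{Res}_{\cos u_1}(P,p_2)$ itself carries one factor $\beta$, since $p_2 + P(\lambda\cos u_1+\nu) = \beta\mu\cos u_1$.
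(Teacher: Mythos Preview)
Your proof is correct and takes a genuinely different, more conceptual route than the paper's. The paper proceeds by brute force: it expands $\det\mathscr{S}$ via multilinearity of the determinant, grouping terms by their degree in $\beta$, and shows through explicit $4\times4$ determinant computations that the $\beta$-free part and the part linear in $\beta$ each vanish identically. It then writes the surviving piece as a sum $\mathfrak{D}_1+\mathfrak{D}_2+\mathfrak{D}_3$ with explicit formulas --- in particular $\mathfrak{D}_3=\alpha^2\beta^2\gamma^2\mu^2$, which is exactly your $\mathrm{Res}(P,p_2)^2$. Your argument replaces those determinant manipulations with a single structural observation: the decompositions $p_1=P^2+\beta^2(\cos^2 u_1-1)$ and $p_2=-P(\lambda\cos u_1+\nu)+\beta\mu\cos u_1$ reduce the resultant modulo $\beta^2$ to $\mathrm{Res}(P^2,p_2)=\mathrm{Res}(P,p_2)^2$, and the single factor of $\beta$ in $\mathrm{Res}(P,p_2)=-\alpha\beta\gamma\mu$ does the rest. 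Your route is shorter and explains \emph{why} a square of $\beta$ appears (the first polynomial degenerates to a square modulo $\beta^2$); the paper's route, in exchange, yields the explicit closed form of $g(u_2)=\det\mathscr{S}/\beta^2$ that is used immediately afterwards.
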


\begin{proof}
Using simple properties of determinants, we can write
$\det\mathscr{S}$ as a sum of different terms. The terms independent
from $\beta$ in this sum are given by 
\[
\small
\begin{split}
  &\left|
  \begin{array}{cccc}
    \alpha^2  & 0  & -\alpha\lambda  & 0\cr
    2\alpha\gamma  & \alpha^2  & -\lambda\gamma-\alpha\nu  & -\alpha\lambda\cr
    \gamma^2  & 2\alpha\gamma  & -\gamma\nu  & -\lambda\gamma-\alpha\nu\cr
    0  & \gamma^2  & 0  & -\gamma\nu\cr
  \end{array}
  \right|
  \cr
  &={\left|
    \begin{array}{cccc}
      \alpha^2  & 0  &\alpha\lambda  & 0\cr
      \alpha\gamma  & 0  & \lambda\gamma  & 0\cr
      \gamma^2  & 2\alpha\gamma  & \gamma\nu  & \lambda\gamma+\alpha\nu\cr
      0  & \gamma^2  & 0  & \gamma\nu\cr
    \end{array}
    \right|}
  +\left|
  \begin{array}{cccc}
    \alpha^2  & 0  & \alpha\lambda  & 0\cr
    \alpha\gamma  & \alpha^2  & \alpha\nu  & \alpha\lambda\cr
    \gamma^2  & 2\alpha\gamma  & \gamma\nu  & \lambda\gamma+\alpha\nu\cr
    0  & \gamma^2  & 0  & \gamma\nu\cr
  \end{array}
  \right|\cr
  &=
  \left|
  \begin{array}{cccc}
    \alpha^2  & 0  & \alpha\lambda  & 0\cr
    \alpha\gamma  & \alpha^2  & \alpha\nu  & \alpha\lambda\cr
    0  & \alpha\gamma  & 0  & \alpha\nu\cr
    0  & \gamma^2  & 0  & \gamma\nu\cr
  \end{array}
  \right|
  + \left|
  \begin{array}{cccc}
    \alpha^2  & 0  & \alpha\lambda  & 0\cr
    \alpha\gamma  & \alpha^2  & \alpha\nu  & \alpha\lambda\cr
    \gamma^2  & \gamma\alpha  & \gamma\nu  & \gamma\lambda\cr
    0  & \gamma^2  & 0  & \gamma\nu\cr
  \end{array}
  \right|
  ,\cr
\end{split}
\]
and both determinants are $0$. The linear terms in $\beta$ are given
by
\[
\small
\begin{split}
  &\left|
  \begin{array}{cccc}
    \alpha^2  & 0  & -\alpha\lambda  & 0\cr
    2\alpha\gamma  & \alpha^2  & -\lambda\gamma-\alpha\nu  & 0\cr
    \gamma^2  & 2\alpha\gamma  & -\gamma\nu  & \beta\mu\cr
    0  & \gamma^2  & 0  & 0\cr
  \end{array}
  \right|
  +
  \left|
  \begin{array}{cccc}
    \alpha^2  & 0  & 0  & 0\cr
    2\alpha\gamma  & \alpha^2  & \beta\mu  & -\alpha\lambda\cr
    \gamma^2  & 2\alpha\gamma  & 0  & -\lambda\gamma-\alpha\nu\cr
    0  & \gamma^2  & 0  & -\gamma\nu\cr
  \end{array}
  \right|,
\end{split}
\]
and this sum is $0$, because the two determinants are opposite.
Therefore, $\mathscr{G}(u_2)$ is made by terms
of order higher than $1$ in $\beta$. It results
\[
\mathscr{G}(u_2) = \mathfrak{D}_1 + \mathfrak{D}_2 + \mathfrak{D}_3,
\]
where
\[
\small
\begin{split}
  &\mathfrak{D}_1 = \left|
  \begin{array}{cccc}
    \beta^2  & 0  & -\alpha\lambda  & 0\cr
    0  & \beta^2  & \beta\mu-\lambda\gamma-\alpha\nu  & -\alpha\lambda\cr
    -\beta^2  & 0  & -\gamma\nu  & \beta\mu-\lambda\gamma-\alpha\nu\cr
    0  & -\beta^2  & 0  & -\gamma\nu\cr
  \end{array}
  \right|,\cr
  &\mathfrak{D}_2 = 
  \left|
  \begin{array}{cccc}
    \beta^2  & 0  & -\alpha\lambda  & 0\cr
    0  & \alpha^2  & \beta\mu-\lambda\gamma-\alpha\nu  & -\alpha\lambda\cr
    -\beta^2  & 2\alpha\gamma  & -\gamma\nu & \beta\mu-\lambda\gamma-\alpha\nu\cr
    0  & \gamma^2  & 0  & -\gamma\nu\cr
  \end{array}
  \right|
  +\left|
  \begin{array}{cccc}
    \alpha^2  & 0  & -\alpha\lambda  & 0\cr
    2\alpha\gamma  & \beta^2  & \beta\mu-\lambda\gamma-\alpha\nu  & -\alpha\lambda\cr
    \gamma^2  & 0  & -\gamma\nu  & \beta\mu-\lambda\gamma-\alpha\nu\cr
    0  & -\beta^2  & 0  & -\gamma\nu\cr
  \end{array}
  \right|,\cr
  &\mathfrak{D}_3 = \left|
  \begin{array}{cccc}
    \alpha^2  & 0  & 0  & 0\cr
    2\alpha\gamma  & \alpha^2  & \beta\mu  & 0\cr
    \gamma^2  & 2\alpha\gamma  & 0  & \beta\mu\cr
    0  & \gamma^2  & 0  & 0\cr
  \end{array}
  \right|.\cr
\end{split}
\]
Their explicit expressions read
\[
\small
\begin{split}
  &\mathfrak{D}_1 = \beta^4\bigl[(\alpha\lambda+\gamma\nu)^2-(\beta\mu-\lambda\gamma-\alpha\nu)^2\bigr],\cr
  &\mathfrak{D}_2 = \beta^2\left[2\alpha\beta\gamma\mu(\gamma\nu-\alpha\lambda)
    + ((\beta\mu-\lambda\gamma-\alpha\nu)^2-4\alpha\gamma\lambda\nu)(\gamma^2-\alpha^2)\right],\cr
  &\mathfrak{D}_3 = \alpha^2\beta^2\gamma^2\mu^2.
\end{split}
\]


\end{proof}

\noindent The trigonometric polynomial
\begin{align*}
  g(u_2) &= \mathscr{G}(u_2)/\beta^2\\
  &=\gamma^4\lambda^2-\beta^4\mu^2-\alpha^4\nu^2+2\,\mu\,\lambda\,\gamma\,\beta\,\left(\beta^2-\gamma^2\right)
  +2\,\mu\,\nu\,\alpha\,\beta\,\left(\alpha^2+\beta^2\right)\\
  &\,\quad +2\,\lambda\,\nu\,\gamma\,\alpha\,\left(\alpha^2-\gamma^2\right)+\left(-\lambda^2+\mu^2+\nu^2\right)
  \left(-\alpha^2\beta^2+\gamma^2\alpha^2+\gamma^2\beta^2\right)
\end{align*}
has total degree 8 in the variables $\cos u_2$, $\sin u_2$, and
corresponds to the polynomial $g$ introduced in \cite{kholvass1999}
with Groebner bases theory. For this reason, generically, there is no
polynomial of smaller degree giving all the $u_2$ components of the
critical points of $d^2$.


We now explain the procedure to reduce the problem to the computation
of the roots of a univariate polynomial. We set
\[
\mathfrak{g}(x,y) = g(u_2),
\]
where
\[
x=\cos u_2, \qquad y=\sin u_2.
\]
We find that
\[
\mathfrak{g}(x,y) = \sum_{j=0}^6 g_j(x) y^j
\]
for some polynomial coefficients $g_j$ such that
\[
\begin{split}
&\deg g_0 =\deg g_1 = \deg g_2 = 6,\\
&\deg g_3 = 5, \quad \deg g_4 = 4, \quad \deg g_5 = 3, \quad \deg g_6 = 2.
\end{split}
\]
Then, we consider the polynomial system
\begin{equation}
  \left\{
  \begin{split}
    &\mathfrak{g}(x,y) = 0,\cr
    & x^2+y^2-1=0 .\cr
  \end{split}
  \right.
  \label{gsysecc}
\end{equation}
Using relations
\[
y^{2k} = (1-x^2)^k,\qquad y^{2k+1} = y(1-x^2)^k,\qquad k\in\mathbb{N},
\]
obtained from the second equation in \eqref{gsysecc}, we can
substitute $\mathfrak{g}$ in system \eqref{gsysecc} with
\[
\tilde{\mathfrak{g}}(x,y) =  a(x)y + b(x), 
\]
where
\[
\begin{split}
  a(x) &= g_1(x) + (1-x^2)g_3(x) + (1-x^2)^2g_5(x),\cr
  b(x) &= g_0(x) + (1-x^2)g_2(x) + (1-x^2)^2g_4(x) + (1-x^2)^3g_6(x).\cr
\end{split}
\]
We can also write
\[
\begin{split}
  a(x) &= a_0(x) + x^2 a_2(x) + x^4 a_4(x),\cr
  b(x) &= b_0(x) + x^2b_2(x) + x^4b_4(x) + x^6b_6(x),\cr
\end{split}
\]
with
\[
\begin{aligned}
  & a_0 = g_1+g_3+g_5,  & a_2 &= -g_3-2g_5,  & a_4 &= g_5,  &  &\cr
  & b_0 = g_0+g_2+g_4+g_6,  & b_2 &= -g_2-2g_4-3g_6,  & b_4 &= g_4+3g_6,  & b_6 &= -g_6.\cr
\end{aligned}
\]
Note that $a$ and $b$ have degree 7 and 8, respectively. We eliminate
$y$ from system
\[
\left\{
\begin{split}
  & \tilde{\mathfrak{g}}(x,y) = 0,\cr
  & x^2+y^2 -1 = 0\cr
\end{split}
\right.
\]
by computing the resultant $\uvp(x)$ of the two polynomials with
respect to $y$, and obtain
\begin{equation}
  \uvp(x) =
  \left|
  \begin{array}{ccc}
    a(x)  & 0  & 1\cr
    b(x)  & a(x)  & 0\cr
    0  & b(x)  & x^2-1\cr 
  \end{array}
  \right| = a^2(x)(x^2-1) + b^2(x),
  \label{deg16u}
\end{equation}
which is a univariate polynomial of degree 16.


Each of the real roots $x$ of $\uvp$, with $|x|\le 1$, is substituted
into the equation $\tilde{\mathfrak{g}}(x,y) = 0$ to get the value of
$y$. Finally, we evaluate $\alpha,\beta,\gamma,\mu,\nu$ at the
computed pairs $(x,y)$ and solve system \eqref{u_critpts} by computing
the values of $\cos u_1$ and $\sin u_1$ from \eqref{cosUeqs} and
\eqref{eqforsinu1}, respectively.

\subsection{Finding the roots of $\uvp$ with Chebychev's polynomials}

To compute the roots of the polynomial $\uvp(x)$ in a numerically
stable way, we need to express $\uvp$ in a basis ensuring that the
roots are well-conditioned functions of its coefficients. This can be
achieved using Chebyshev's polynomials \cite{nofper2017} in place of
the standard monomial basis.

In the monomial basis we have
\begin{equation}
  \uvp(x) = \sum_{j=0}^np_j x^j,
  \label{p_poly}
\end{equation}
for some coefficients $p_j$. The same polynomial can be written as
\begin{equation}
  \uvp(x) = \sum_{j=0}^nc_j T_j(x),
\label{c_poly}
\end{equation}
where $T_j$ are Chebyshev's polynomials, recursively defined by
\begin{equation}
  T_0(x) = 1,\qquad T_1(x) = x,\qquad
  T_{j+1} = 2xT_j - T_{j-1}, \ j=1,\ldots,n-1,
\label{chebdef}
\end{equation}
which are a basis for the vector space of polynomials of degree at
most $n$. The coefficients $c_j$ are obtained from the $p_j$ as
follows. Setting
\[
X = (1, x, x^2,\ldots, x^n)^t,\qquad
Y = (T_0(x), T_1(x),\ldots, T_n(x))^t
\]
we have
\begin{equation}
  {A}X = Y,
  \label{AXeqY}
\end{equation}
with
\[
A = \left[
  \begin{array}{cccc}
    a_{00}  & 0  & \hdots  & 0\cr
    a_{10}  & a_{11}  & \ddots  & \vdots\cr
    \vdots  &  & \ddots  & 0\cr
    a_{n0}  & a_{n1}  & \hdots  & a_{nn}\cr
  \end{array}
  \right],
\]
where the integer coefficients $a_{ij}$ are determined from relations
\eqref{chebdef}. We invert $A$ by the following procedure. Define
\[
\tilde{A} =
\left[
  \begin{array}{cccc}
    1  & 0  & \hdots  & 0\cr
    \frac{a_{10}}{a_{11}}  & 1  & \ddots  & \vdots\cr
    \vdots  &  & \ddots  & 0\cr
    \frac{a_{n0}}{a_{nn}}  & \frac{a_{n1}}{a_{nn}}  & \hdots  & 1\cr
  \end{array}
  \right],
\qquad
\tilde{Y} = \left(\frac{T_0}{a_{00}}, \frac{T_1}{a_{11}},\ldots,
\frac{T_n}{a_{nn}}\right)^t.
\]
Equation \eqref{AXeqY} becomes
\[
\tilde{A}X = \tilde{Y}, 
\]
with
\[
\tilde{A} = I + N,
\]
where $N^n=0$, that is $N$ is a nilpotent matrix of order $n$.
Relation
\[
(I+N)(I-N+N^2-N^3+\ldots+(-1)^{n-1}N^{n-1}) = I,
\]
implies that the inverse of $\tilde{A}$ is
\begin{equation}
  \tilde{A}^{-1} = I + \sum_{j=1}^{n-1}(-1)^jN^j.
  \label{Atildeinv}
\end{equation}
Let us introduce the vectors
\[
P = (p_0, p_1, \ldots, p_n)^t, \qquad C = (c_0,c_1,\ldots,c_n)^t
\]
made by the coefficients of the polynomials in \eqref{p_poly},
\eqref{c_poly}, and the diagonal matrix
\[
D = \mathrm{diag}\{
a_{00}^{-1},
a_{11}^{-1},\ldots,
a_{nn}^{-1}\}.
\]
From \eqref{AXeqY} and \eqref{Atildeinv} we can write
\[
X = \tilde{A}^{-1}DY,
\]
so that
\[
\uvp(x) = C^tY = P^tX = (P^t\tilde{A}^{-1}D)Y.
\]
Therefore, the relation between the coefficients $c_j$ and $p_j$ is
given by
\[
C = (D\tilde{A}^{-t}) P.
\]

Searching for the roots of $\uvp(x)$ corresponds to computing the
eigenvalues of an $n\times n$ matrix $\mathscr{C}$, called {\em
  colleague matrix} \cite{good1961}. We use the form of the colleague
matrix described in \cite{casrob2021}:
\begin{equation}
  \coll =\frac{1}{2}\left[
    \begin{array}{ccccc}
      0       & 1       & 0       & \hdots    & 0       \cr
      1       & 0       & \ddots  &           & \vdots  \cr
      0       & \ddots  & \ddots  & 1         & 0       \cr      
      \vdots  & \ddots  & 1       & 0         & \sqrt{2}\cr
      0       & \hdots  & 0       & \sqrt{2}  & 0       \cr
    \end{array}
    \right]
  - \frac{1}{2c_n}
  \left[
    \begin{array}{c}
      1\cr
      0\cr
      \vdots\cr
      0\cr
    \end{array}
    \right]
  \left[
    \begin{array}{ccccc}
      c_{n-1}  & c_{n-2}  & \hdots  & c_1  & \sqrt{2}c_0\cr
    \end{array}
    \right].
  \label{colleague}
\end{equation}
The computation of the roots of a polynomial using the colleague
matrix and a backward stable eigenvalue algorithm, such as the QR
algorithm, is backward stable, provided that the 2-norm of the
polynomial is moderate (see \cite{nofper2017}).

\section{True anomalies and trigonometric polynomials}
\label{s:true_anom}

The same steps described in Section \ref{s:ecc_anom} can be applied to
look for the critical points of the squared distance function
expressed in terms of the true anomalies $f_1$, $f_2$. Note that using
true anomalies allows to deal with both bounded and unbounded
trajectories \cite{sitarski1968, gronchi2005}.


%
%
%
%
%
%
%

We write the system
\[
\nabla d^2(f_1,f_2) =
\mathbf{0}
\]
as
\begin{equation}
  \left\{
  \begin{split}
    & \alpha\cos f_1 + \beta\sin f_1 + \gamma = 0,\cr
    & \kappa\cos^2 f_1 + \lambda\sin f_1\cos f_1 + \mu\cos f_1 + \nu\sin f_1 + \kappa = 0,
  \end{split}
  \right.
  \label{critpts}
\end{equation}
where
\[
\begin{split}
  \alpha &= p_1(1+e_2\cos f_2)(K\sin f_2 - M(e_2+\cos f_2)) + p_2e_1e_2\sin f_2, \cr
  \beta &=  p_1(1+e_2\cos f_2)(L\sin f_2 - N(e_2+\cos f_2)), \cr
  \gamma &= p_2e_2\sin f_2,\cr
  \kappa &= -p_2e_1(L\cos f_2 + N\sin f_2), \cr
  \lambda &=  p_2e_1(K\cos f_2 + M\sin f_2), \cr
  \mu &= -p_2(1+e_1^2)(L\cos f_2 + N\sin f_2),\cr
  \nu &= p_1e_1(1+e_2\cos f_2) + p_2(K\cos f_2 + M\sin f_2).\cr
\end{split}
\]
We also set
\[
\begin{split}
  &\tilde{\kappa} = p_2(L\cos f_2+N\sin f_2),\cr
  &\tilde{\lambda} = p_2(K\cos f_2+M\sin f_2),
\end{split}
\]
so that
\[
\kappa = -e_1\tilde{\kappa} = \frac{e_1}{1+e_1^2}\mu,\qquad
\lambda = e_1\tilde{\lambda}, \qquad\nu = p_1e_1(1+e_2\cos f_2) + \tilde{\lambda}. 
\]
Inserting relation
\begin{equation}
  \sin f_1 = -\frac{1}{\beta}(\alpha\cos f_1 + \gamma)
  \label{sinf1}
\end{equation}
into $\cos^2f_1+\sin^2f_1-1 = 0$ and into the second equation of
\eqref{critpts}, we obtain
\begin{equation}
  \left\{
  \begin{split}
    & (\alpha^2+\beta^2)\cos^2 f_1 + 2\alpha\gamma\cos f_1 + \gamma^2-\beta^2 = 0,\cr
    & (\beta\kappa - \alpha\lambda)\cos^2 f_1 + (\beta\mu - \lambda\gamma-\alpha\nu)\cos f_1 + \beta\kappa-\gamma\nu = 0.\cr
  \end{split}
  \right.
  \label{cosVeqs}
\end{equation}
As in Section~\ref{s:ecc_anom}, we consider the Sylvester matrix of
the two polynomials in \eqref{cosVeqs}
\[
\mathscr{T} = \left[
  \begin{array}{cccc}
    \alpha^2+\beta^2  & 0  & \beta\kappa-\alpha\lambda  & 0\cr
    2\alpha\gamma  & \alpha^2+\beta^2  & \beta\mu-\lambda\gamma-\alpha\nu  &\beta\kappa-\alpha\lambda\cr
    \gamma^2-\beta^2  & 2\alpha\gamma  & \beta\kappa-\gamma\nu  & \beta\mu-\lambda\gamma-\alpha\nu\cr
    0  & \gamma^2-\beta^2  & 0  & \beta\kappa-\gamma\nu\cr
  \end{array}
  \right]
\]
and define 
\[
\mathscr{H}(f_2) = \det\mathscr{T},
\]
that we are able to factorize. In particular, we can write
\[
\mathscr{H}(f_2) = (1+e_2\cos f_2)^2\beta^2h(f_2), 
\]
where
\begin{equation}
  \begin{aligned}
    h(f_2) &= \tilde{\beta}^4\xi^2\mu^2(4\eta^2-1)+2\tilde{\beta}^3\xi\mu\left[\lambda\gamma+\alpha\nu-2\eta(\alpha\lambda+\gamma\nu)\right]\\
    &\,\quad +\tilde{\beta}^2(\alpha^2-\gamma^2)\left[\lambda^2-\nu^2+\mu^2(4\eta^2-1)\right]\\
    &\,\quad -2\mu\tilde{\beta}\tilde{\alpha}^2\xi^2\left[\tilde{\alpha}(\lambda\eta-\nu)-3\eta e_1^3 p_1\gamma\right]
    +\mu^2\tilde{\alpha}^2\left[\gamma(1-2\eta e_1)-\eta\tilde{\alpha}\xi\right]^2\\
    &\,\quad -2\mu\tilde{\beta}p_1e_1(1-e_1^2)\eta\gamma^2\left[3e_1\tilde{\alpha}\xi-\gamma(1-e_1^2)\right]
    -(\alpha^2-\gamma^2)(\nu\tilde{\alpha}+p_1e_1^2\gamma)^2,
  \end{aligned}
  \label{truean_hpoly}
\end{equation}
with
\begin{align*}
  \eta &= \frac{e_1}{1+e_1^2},  & \xi &= 1+e_2\cos f_2,\\
  \tilde{\alpha} &= p_1[K\sin f_2-M(e_2+\cos f_2)],  & \tilde{\beta} &= p_1[L\sin f_2-N(e_2+\cos f_2)].
\end{align*}
We can show that $h(f_2)$ has degree 8 in $(\cos f_2, \sin f_2)$. The
related computations are displayed in Appendix~\ref{a:true_anom}.

Let us set
\[
\mathfrak{h}(x,y) = h(f_2),
\]
where
\[
x=\cos f_2, \qquad y = \sin f_2.
\]
We find that
\[
\mathfrak{h}(x,y) = \sum_{j=0}^6 h_j(x) y^j
\]
for some polynomial coefficients $h_j$ such that
\[
\begin{split}
& \deg h_0 = 8,\quad \deg h_1 = 7,\quad \deg h_2 = 6,\\
& \deg h_3 = 5,\quad \deg h_4 = 4,\quad \deg h_5 = 3,\quad \deg h_6 = 2.
\end{split}
\]
Then, we consider the system
\begin{equation}
  \left\{
  \begin{split}
    & \mathfrak{h}(x,y) = 0,\cr
    & x^2+y^2-1 = 0.\cr
  \end{split}
  \right.
  \label{gsys}
\end{equation}
Proceeding as in Section~\ref{s:ecc_anom} we can substitute
$\mathfrak{h}(x,y)$ with
\begin{equation}
  \tilde{\mathfrak{h}}(x,y) = \mathfrak{a}(x)y + \mathfrak{b}(x), 
  \label{htilde}
\end{equation}
with
\[
\begin{split}
  \mathfrak{a}(x) &= a_0(x)+x^2a_2(x)+x^4a_4(x),\cr
  \mathfrak{b}(x) &= b_0(x)+x^2b_2(x)+x^4b_4(x)+x^6b_6(x),\cr
\end{split}
\]
where
\[
\begin{aligned}
  a_0 &= h_1+h_3+h_5,  & a_2 &= -h_3-2h_5,  & a_4 &= h_5,  &  &\\
  b_0 &= h_0+h_2+h_4+h_6,  & b_2 &= -h_2-2h_4-3h_6,  & b_4 &= h_4+3h_6,  & b_6 &=-h_6.
\end{aligned}
\]
We apply resultant theory to eliminate the dependence on $y$ as in
Section \ref{s:ecc_anom} and obtain a univariate polynomial
$\mathfrak{v}$ of degree 16. The real roots of $\mathfrak{v}$ with
absolute value $\leq 1$
correspond to the values of $\cos f_2$ we are searching for. We
compute $\sin f_2$ from \eqref{htilde} and substitute $\cos f_2$ and
$\sin f_2$ in \eqref{cosVeqs}. Finally, $\cos f_1$ and $\sin f_1$ are
found by solving \eqref{cosVeqs} and using \eqref{sinf1}.

\subsection{Angular shifts}
\label{s:angshift}

Also for the method presented in this section we consider the
application of an angular shift. If we define the new shifted angle
$v_2$ by
\[
v_2 = f_2-s_2,
\]
for some $s_2\in[0,2\pi)$, the coefficients of the polynomial
  \eqref{truean_hpoly} written in terms of $v_2$ are derived following
  the computations of Appendix \ref{a:shift}. Then, following a
  procedure analogous to Section \ref{s:true_anom}, we find the values
  of $v_2$ and shift back to get $f_2$. Finally, we can apply an
  angular shift also to the angle $f_1$ when solving system
  \eqref{cosVeqs}. Defining the shifted angle as
\[
v_1 = f_1-s_1,
\]
for $s_1\in[0,2\pi)$, system \eqref{cosVeqs} becomes
  \begin{equation}
    \left\{
    \begin{split}
      & A\cos v_1 + B\sin v_1 + C = 0,\cr
      & D\cos^2 v_1 + E\sin v_1\cos v_1 + F\cos v_1 + G\sin v_1 + H = 0,
    \end{split}
    \right.
  \end{equation}
where
\[
\begin{split}
  A &= \alpha\cos s_1+\beta\sin s_1,\cr
  B &= \beta\cos s_1-\alpha\sin s_1,\cr
  C &= \gamma,\cr
  D &= \kappa\cos^2s_1-\kappa\sin^2s_1+2\lambda\sin s_1\cos s_1,\cr
  E &= \lambda\cos^2s_1-\lambda\sin^2s_1-2\kappa\sin s_1\cos s_1,\cr
  F &= \mu\cos s_1+\nu\sin s_1,\cr
  G &= \nu\cos s_1-\mu\sin s_1,\cr
  H &= \kappa\sin^2s_1-\lambda\sin s_1\cos s_1+\kappa,
\end{split}
\]
with $\alpha, \beta, \gamma, \kappa, \lambda, \mu, \nu$ defined at the
beginning of this section.

\section{Numerical tests}
\label{s:num_tests}

We have developed Fortran codes for each of the methods presented in
this paper. We denote these methods with (OE, OES, TE, TEC, TT, TTS),
see Table~\ref{tab:err_data}. Moreover, we denote by (OT) the method
presented in \cite{gronchi2005}. Numerical tests have been carried out
for pairs of bounded trajectories to compare the different methods.

Taking the NEA catalogue available at
\url{https://newton.spacedys.com/neodys/}, we applied these methods to
compute the critical points of the squared distance between each NEA
and the Earth, and between all possible pairs of NEAs. We applied a
few simple checks to detect errors in the results:
\begin{itemize}
\item Weierstrass check (W): for each pair of trajectories we have to
  find at least one maximum and one minimum point;
  
\item Morse check (M): for each pair of trajectories, let $N$ be the
  total number of critical points, and $M$ and $m$ be the number of
  maximum and minimum points, respectively. Then (assuming $d^2$ is a
  Morse function) we must have $N=2(M+m)$;
  
\item Minimum distance check ($\dmin$): we sample the two trajectories
  with $k$ uniformly distributed points each (we used $k=10$), and
  compute the distance between each pair of points. We check that the
  minimum value of $d$ computed through this sampling is greater than
  the value of $\dmin$ obtained with our methods.
\end{itemize}
For each method a small percentage of cases fail due to some of the
errors above. However, in our tests, for each pair of orbits, at least
one method passes all three checks.

The angular shifts (see Sections \ref{s:angshift_ecc},
~\ref{s:angshift}) allow us to solve the majority of detected errors
for the methods of Sections \ref{s:ecc_reg} and \ref{s:true_anom}
without shift (OE, TT). Applying a shift could also be a way to solve
most of the errors detected by the method of Section \ref{s:ecc_anom}
(TE, TEC).

Some data on the detected errors for each method is reported in Table
\ref{tab:err_data}. Here we show the percentages of cases failing some
of the three checks described above. We note that the NEA catalogue
contains 31,563 NEAs with bounded orbits (to the date of February 25,
2023). Therefore, for the NEA--Earth test we are considering 31,563
pairs of orbits, while for the NEA--NEA test the number of total pairs
is 498,095,703. From Table \ref{tab:err_data} we see that the method
TE
is improved with Chebychev's polynomials (TEC). In the same way, the
methods OE, TT
are improved by applying angular shifts in case of detected errors
(OES, TTS). Indeed, the method TTS
turns out to be the most reliable for the computation of $\dmin$.

\begin{table}
  \small
  \centering
  \renewcommand{\arraystretch}{1.2}
  \begin{tabular}{p{4cm} r|ccc|ccc}
   \hskip 1.5cm\textbf{algorithm}  & \multicolumn{1}{r|}{}  & \multicolumn{3}{c|}{\textbf{NEA -- Earth}}  & \multicolumn{3}{c}{\textbf{NEA -- NEA}}\\
   & \multicolumn{1}{r|}{}  & \multicolumn{1}{c}{W}  & \multicolumn{1}{c}{M}  & \multicolumn{1}{c|}{$\dmin$}  & \multicolumn{1}{c}{W}
   & \multicolumn{1}{c}{M}  & \multicolumn{1}{c}{$\dmin$}\\
    \hline
    Ord poly, true anom  & \multicolumn{1}{l|}{(OT)}  & 0  & 0  & 0  & $4\cdot10^{-6}$  & $4\cdot10^{-6}$  & $4\cdot10^{-6}$\\
    Ord poly, ecc anom  & \multicolumn{1}{l|}{(OE)}  & 0.0095  & 0.0221  & 0  & 0.0008  & 0.0633  & 0.0005\\
    Ord poly, ecc anom, shift  & \multicolumn{1}{l|}{(OES)}  & 0  & 0  & 0  & $1.1\cdot10^{-5}$  & $1.7\cdot10^{-5}$  & $4\cdot10^{-6}$ \\
    Trig poly, ecc anom & \multicolumn{1}{l|}{(TE)}  & 0  & 0.5732  & 0  & 0.0004  & 0.5234  & 0.0013\\
    Trig poly, ecc anom, Cheb & \multicolumn{1}{l|}{(TEC)}  & 0  & 0.0095  & 0  & 0.0003  & 0.0216  & 0.0003\\
    Trig poly, true anom & \multicolumn{1}{l|}{(TT)}  & 0  & 0.0253  & 0  & 0.0086  & 0.0408  & 0.0025\\
    Trig poly, true anom, shift & \multicolumn{1}{l|}{(TTS)}  & 0  & 0  & 0  &$1.3\cdot10^{-5}$  & 0.0006  & 0\\
    \hline
  \end{tabular}
  \caption{Percentages of detected errors with each method applied to
    the computation of all critical points between all NEAs and the
    Earth and between all pairs of NEAs.}
  \label{tab:err_data}
  \normalsize
\end{table}

Two additional ways to check in particular the computation of $\dmin$
are discussed below.

\subsection{Reliability test for $\dmin$}

Although all the presented methods allow us to find all the critical
points of $d^2$, we are particularly interested in the correct
computation of the minimum distance $\dmin$. For this reason, we
introduce two different tests to check whether the computed values of
$\dmin$ are reliable.

The first test is based on the results of \cite{gv13}, where the
authors found optimal upper bounds for $\dmin$ when one orbit is
circular.
Let us denote with ${\cal A}_1$ and ${\cal A}_2$ the two trajectories.
Assume that ${\cal A}_2$ is circular with orbital radius $r_2$, and
call $q_1,e_1,i_1,\omega_1$ the pericenter distance, eccentricity,
inclination and argument of pericenter of ${\cal A}_1$. Moreover, set
\begin{equation}
  {\cal C} = [0, 1] \times [0, \pi/2], \qquad
  {\cal D} = [0, q_{\rm max}] \times[0,\pi/2],
  \label{CDsets}
\end{equation}
where we used $q_{\rm max}=1.3$, which is the maximum perihelion
distance of near-Earth objects. Then, for each choice of
$(q_1,\omega_1)\in {\cal D}$ we have
\begin{equation}
  \max_{(e_1,i_1)\in{\cal C}}\dmin = \max\{r_2-q_1,\delta(q_1,\omega_1)\},
  \label{bounds_q_om}
\end{equation}
where $\delta(q_1,\omega_1)$ is the distance between ${\cal A}_1$ and
${\cal A}_2$ with $e_1=1, i_1=\pi/2$:
\begin{equation}
  \delta(q,\omega) = \sqrt{(\xi-r_2\sin\omega)^2 + \Bigl(\frac{\xi^2 - 4q^2}{4q} + r_2\cos\omega\Bigr)^2}\,,
  \label{deltaomega}
\end{equation}
with $\xi=\xi(q,\omega)$ the unique real solution of
\[
x^3 + 4q(q+\cos\omega)x - 8r_2q^2\sin\omega = 0.
\]
We compare this optimal bound with the maximum values of $\dmin$
computed with the method OE, for a grid of values in the
$(q_1,\omega_1)$ plane. The results are reported in Figure
\ref{maxdmin_qom}. Here we see that the maximum values of $\dmin$
obtained through our computation appear to lie on the grey surface
corresponding to the graph of $\displaystyle{\max_{{\cal
      C}}\dmin(q_1,\omega_1)}$ defined in \eqref{bounds_q_om}. This
test confirms the reliability of our computations.  Similar checks
were successful with all the methods of Table \ref{tab:err_data}.

\begin{figure}
  \centering
  \includegraphics[scale=.8]{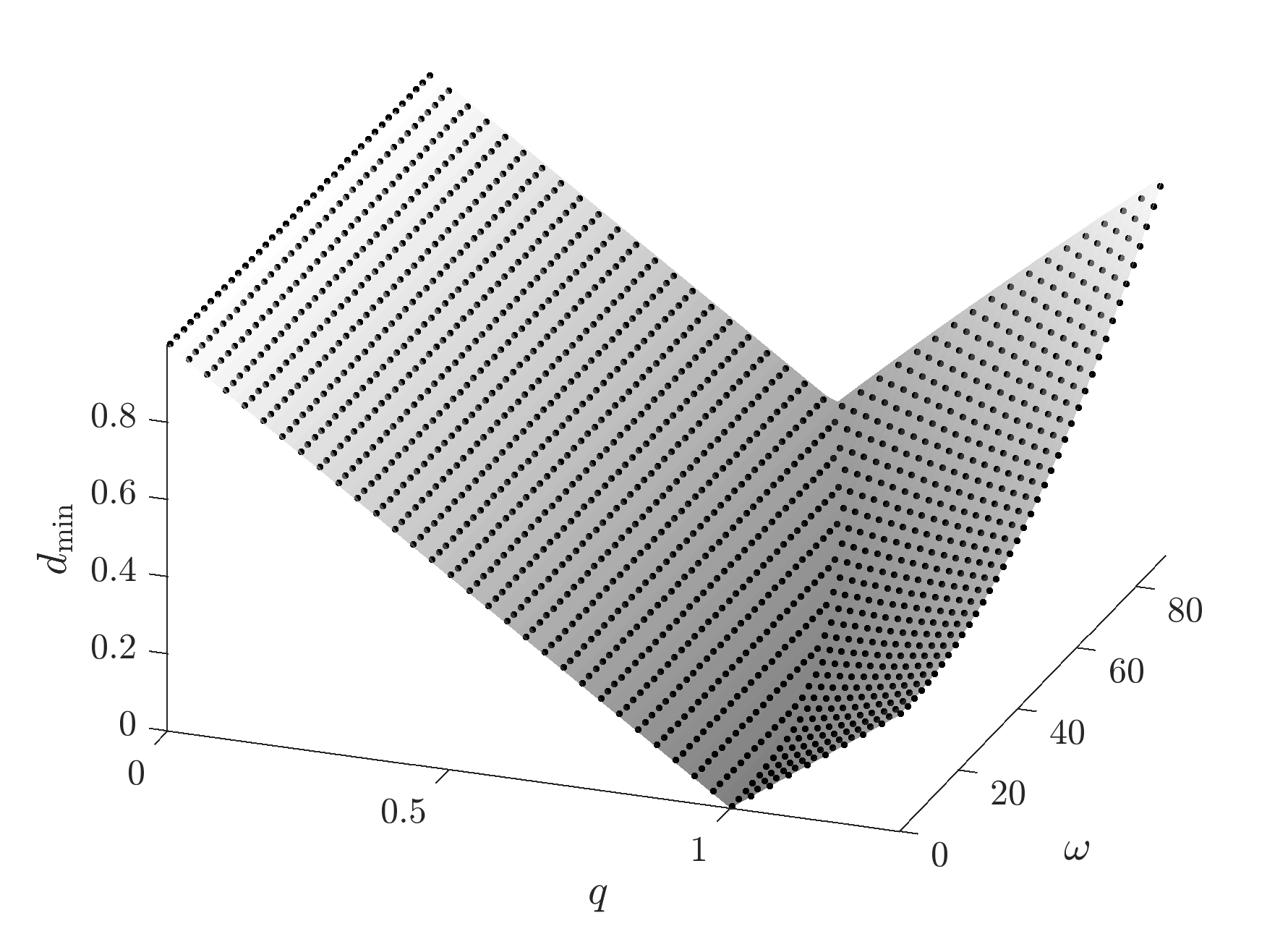}
  \caption{Comparison between the graph of
    $\max_{(e_1,i_1)}\dmin(q,\omega)$ defined in \eqref{bounds_q_om}
    and the maximum values of $\dmin$ computed with the algorithm of
    Section~\ref{s:ecc_reg} for a grid of values of $(q_1,\omega_1)$.}
  \label{maxdmin_qom}
\end{figure}

\medbreak To test our results also in case of two elliptic orbits, we
consider the following bound introduced in \cite{gn20} for the nodal
distance $\deltanod$ defined below. Let
\begin{align*}
  \rpiu &= \frac{q_1(1+e_1)}{1+e_1\cos\omegamutuno}, &
  \rmeno &= \frac{q_1(1+e_1)}{1-e_1\cos\omegamutuno},\\[0.5ex]
  \rpiup &= \frac{q_2(1+e_2)}{1+e_2\cos\omegamutdue}, &
  \rmenop &= \frac{q_2(1+e_2)}{1-e_2\cos\omegamutdue},
\end{align*}
where $q_2$, $e_2$ are the pericenter distance and eccentricity of
${\cal A}_2$, and $\omegamutuno$, $\omegamutdue$ are the mutual
arguments of pericenter (see \cite{gn20}).

We introduce the ascending and descending nodal distances
\[
\dnod^+ = \rpiup - \rpiu, \qquad \dnod^- = \rmenop - \rmeno.
\]
The (minimal) nodal distance $\deltanod$ is defined as
\begin{equation}
  \deltanod = \min\bigl\{|\dnod^+|, |\dnod^-|\bigr\}.
  \label{deltanod}
\end{equation}
Set
\[
{\cal C}' = [0, 1] \times [0, \pi].
\]
For each choice of $(q_1,\omega_1)\in{\cal D}$, defined as in
\eqref{CDsets}, we have
\begin{equation}
  \displaystyle \max_{(e_1,\omegamutdue)\in{\cal C}'} \deltanod = \max\bigl\{\uppintom, \uppextom, \upplinkom\bigr\},
  \label{maxdeltanodbound}
\end{equation}
where, denoting by $Q_2$ the apocenter distance of ${\cal A}_2$ and by
$p_2 = q_2(1+e_2)$ its conic parameter,
\[
\uppintom(q,\omega) = p_2 - q,
\]
\[
\uppextom(q,\omega) = \min\Bigl\{ \frac{2q}{1-\cos\omega} - \frac{p_2}{1-\hat{\xi}'_*},\ \frac{2q}{1+\cos\omega} - q_2\Bigr\},
\]
\[
\upplinkom(q,\omega) = \min\left\{ Q_2 - \frac{q(1+\hat{e}_*)}{1+\hat{e}_*\cos\omega},
\frac{2q}{1-\cos\omega}-q_2 \right\},
\]
with
\[
\hat{\xi}'_* = \min\{\xi_*',e_2\},\qquad
\xi'_*(q,\omega) = \frac{4q\cos\omega}{p_2\sin^2\omega+\sqrt{p_2^2\sin^4\omega+16q^2\cos^2\omega}},
\]
and
\begin{align*}
  & \hat{e}_* = \max\bigl\{0,\min\{e_*,1\}\bigr\},\\
  & e_*(q,\omega) = \frac{2(p_2-q(1-e_2^2))}{q(1-e_2^2)+\sqrt{q^2(1-e_2^2)^2+4p_2\cos^2\omega(p_2-q(1-e_2^2))}}.
\end{align*}

We compare the computed values of $\dmin$ with the bound
\eqref{maxdeltanodbound} on the maximum nodal distance. The results
are displayed in Figure \ref{deltanod_qom} where, for four different
values of $e_2$, the grey surface represents the bound of \cite{gn20},
while the black dots correspond to the maximum value of $\dmin$ for a
grid in the $(q_1,\omega_1)$ plane computed with the method OE.  Since
the value of $\deltanod$ is always greater than or equal to the value
of $\dmin$, for the test to be satisfied, we need all the black dots
to fall below or lie on the grey surface. From Figure
\ref{deltanod_qom} we see that this is indeed what happens.

Similar checks done with the methods OT, OE, OES, TT, TTS were
successful.

\begin{figure}
  \begin{subfigure}{.5\textwidth}
    \hspace{-3em}
    \includegraphics[width=1.1\textwidth]{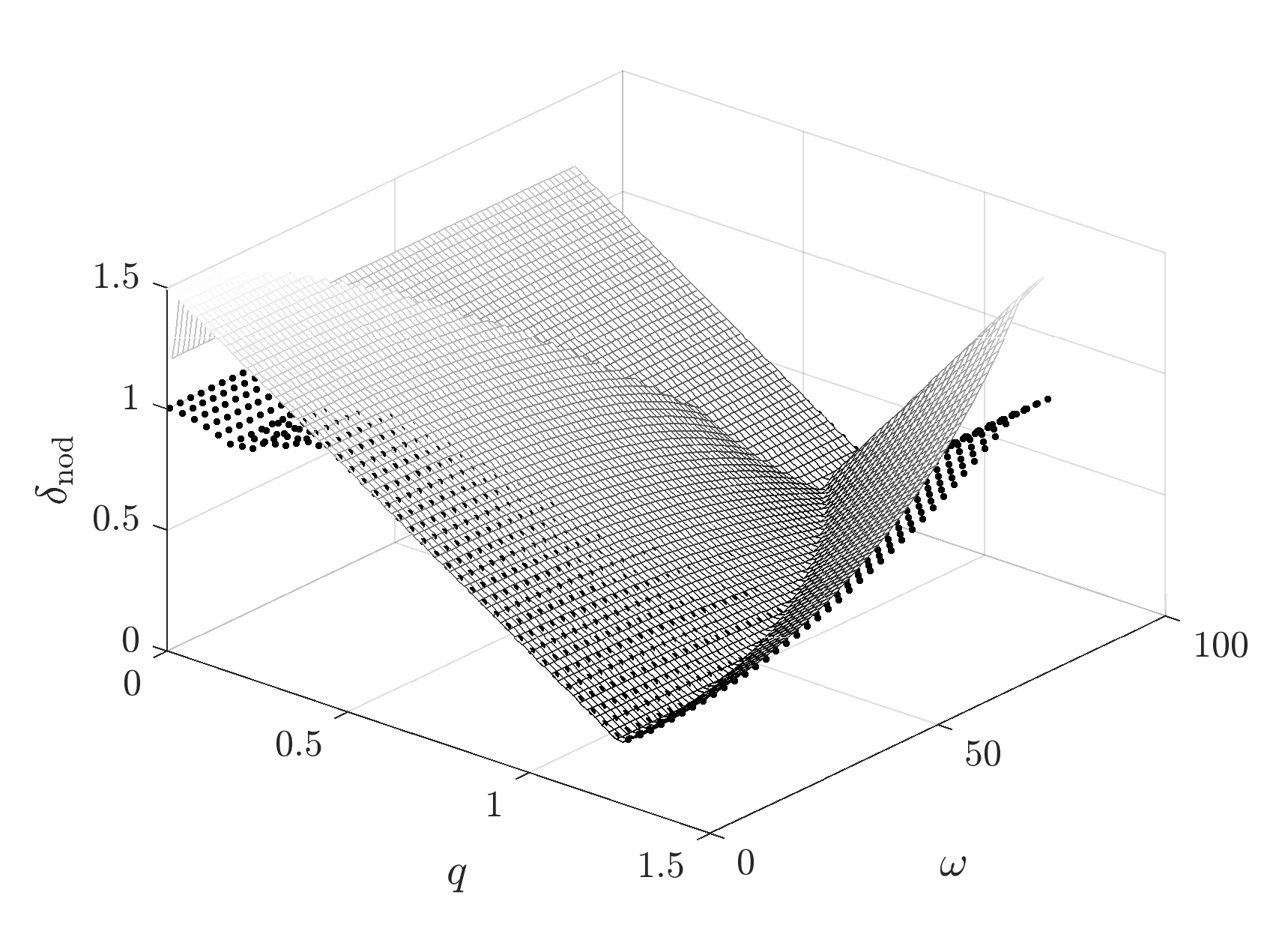}
  \end{subfigure}%
  \begin{subfigure}{.5\textwidth}
    \centering
    \includegraphics[width=1.1\textwidth]{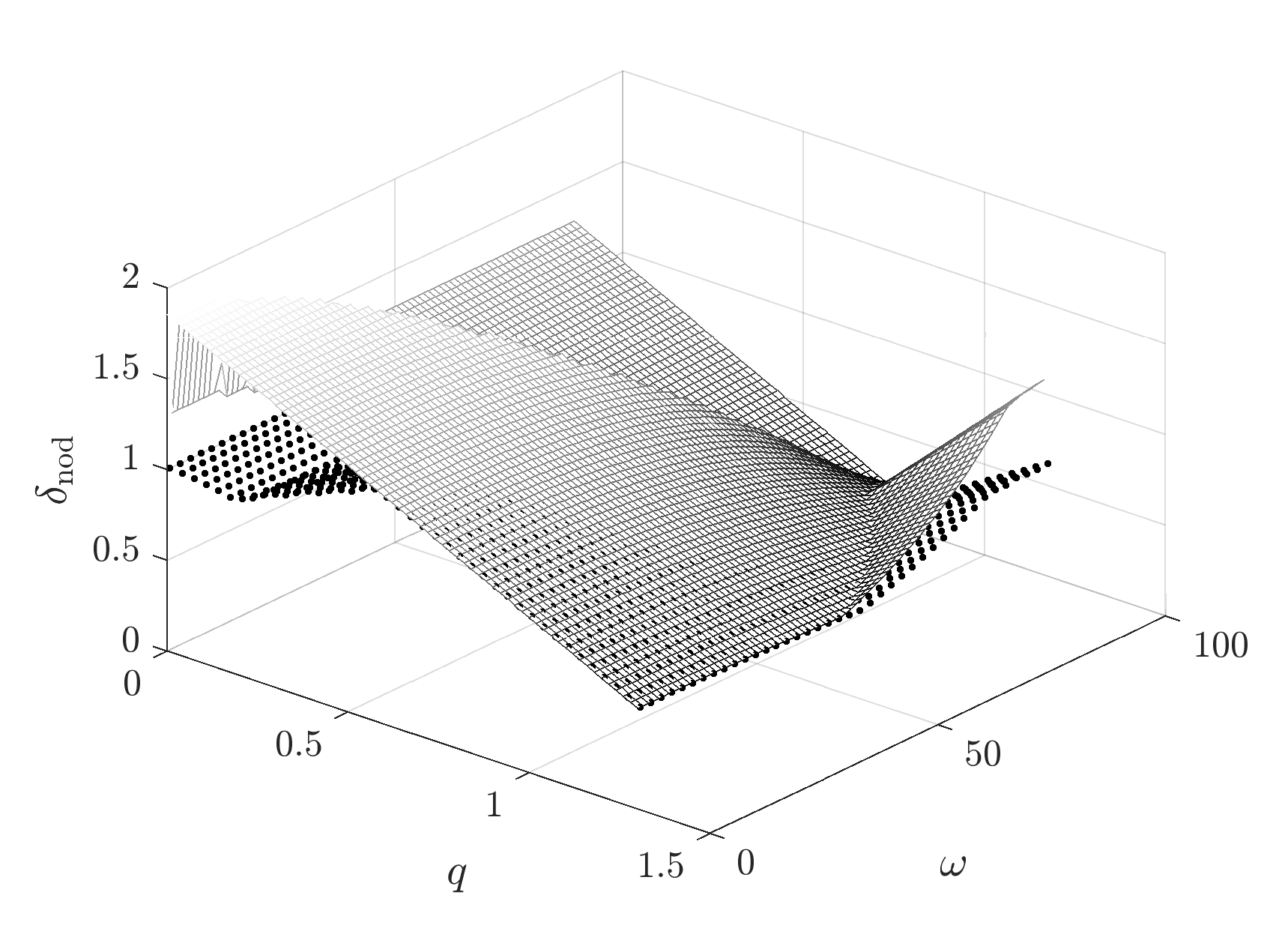}
  \end{subfigure}
  \begin{subfigure}{.5\textwidth}
    \hspace{-3em}
    \includegraphics[width=1.1\textwidth]{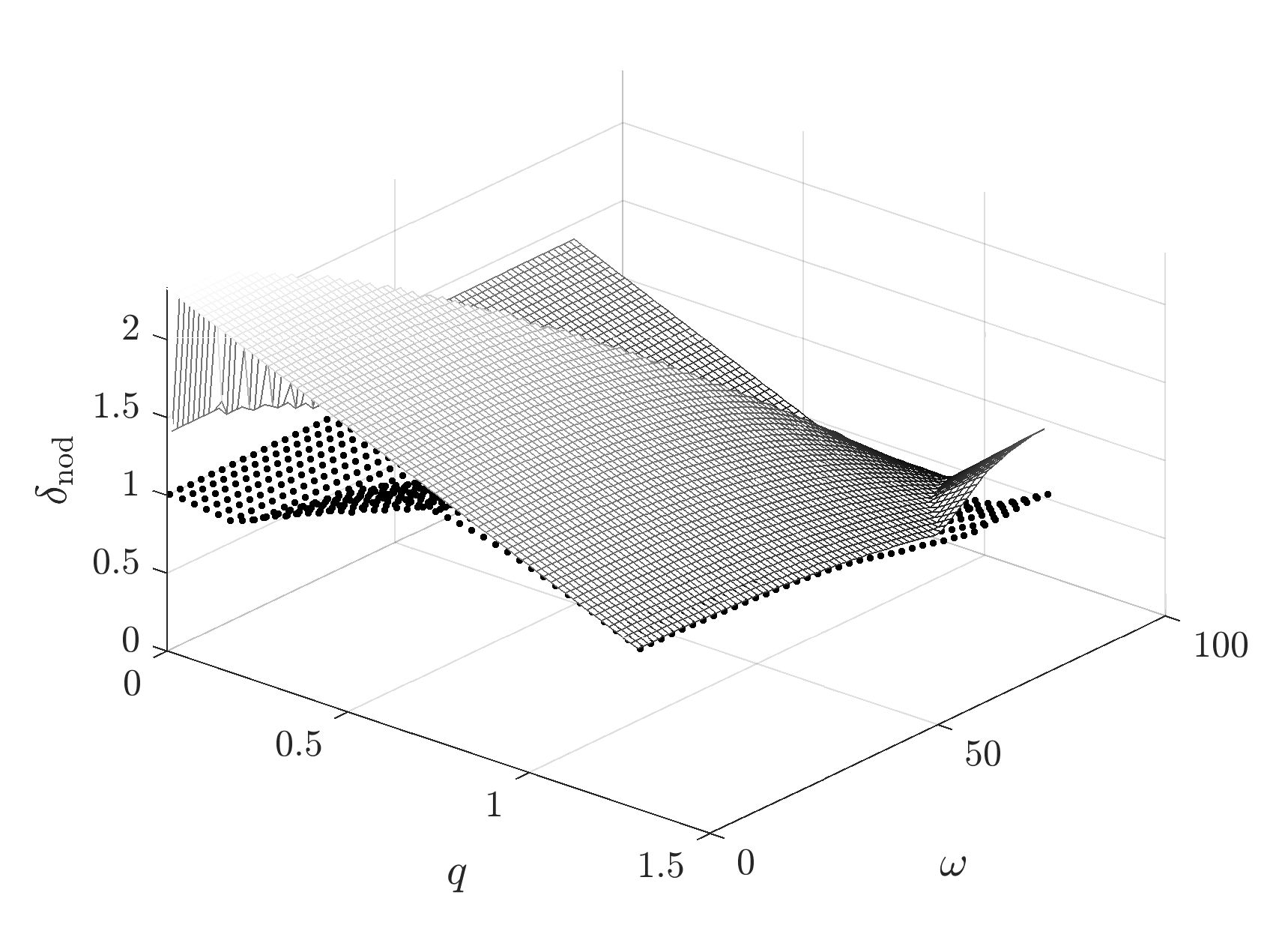}
  \end{subfigure}%
  \begin{subfigure}{.5\textwidth}
    \centering
    \includegraphics[width=1.1\textwidth]{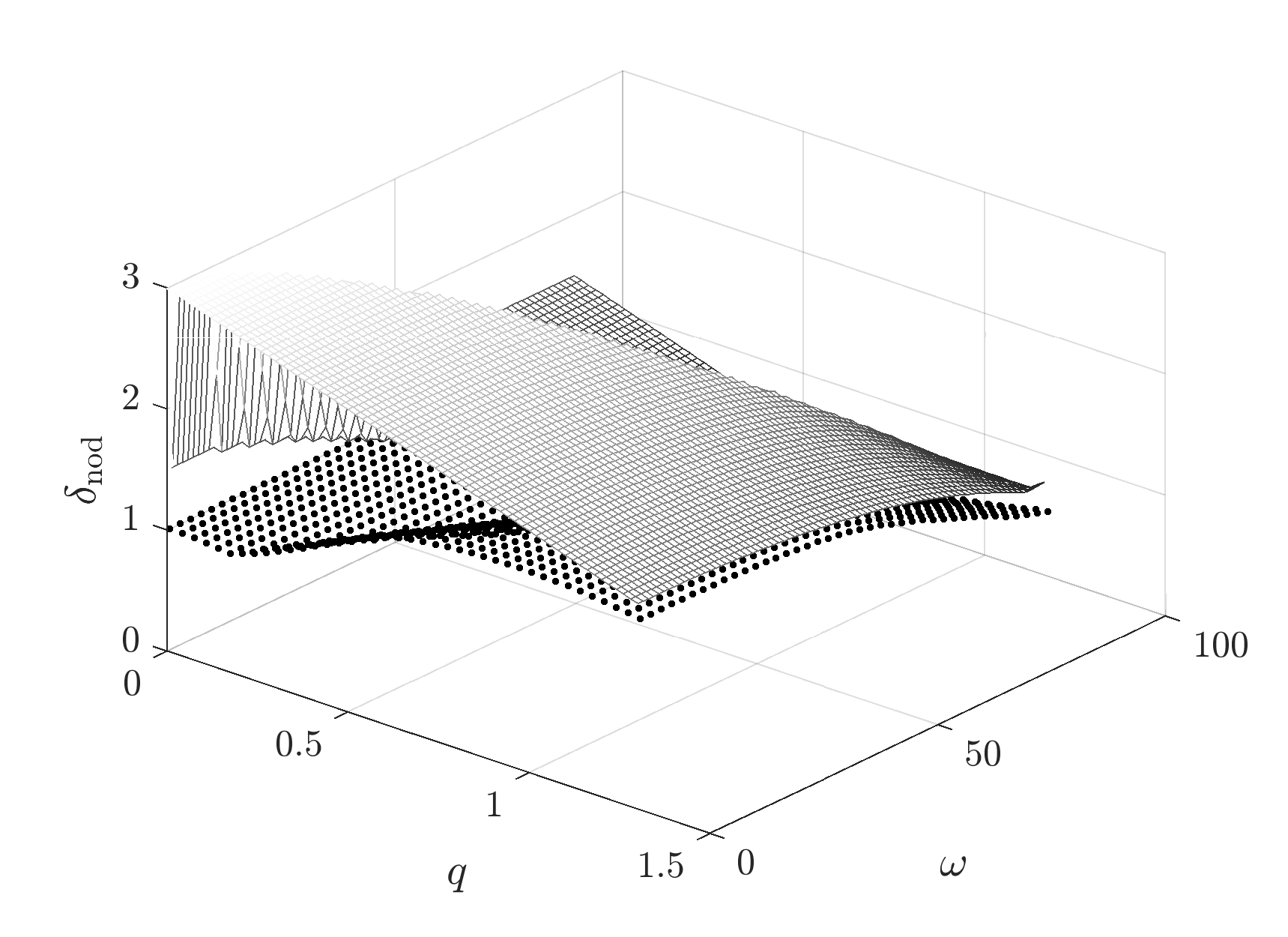}
  \end{subfigure}
  \caption{Comparison of the maximum MOID obtained with the method of
    Section \ref{s:ecc_reg} and the bound on the nodal distance found
    in \cite{gn20}. These plots were drawn using values of $e_2$ equal
    to 0.2 (top-left), 0.3 (top-right), 0.4 (bottom-left) and 0.5
    (bottom-right).}
  \label{deltanod_qom}
\end{figure}

\section{The planar case}
\label{s:planar}

Let us consider the case of two coplanar conics parametrized by the
true anomalies $f_1$, $f_2$.
Then, $(f_1, f_2)$ is a critical point of
$d^2$ iff $d^2(f_1, f_2)=0$ or the tangent vectors
\[
\boldsymbol{\tau}_1(f_1) = \frac{\partial {\cal X}_1}{\partial f_1},\qquad
\boldsymbol{\tau}_2(f_2) = \frac{\partial {\cal X}_2}{\partial f_2}
\]
to the first and second conic at ${\cal X}_1(f_1)$ and ${\cal
  X}_2(f_2)$, respectively, are parallel. If one trajectory, say the
second one, is circular, then the tangent vector $\boldsymbol{\tau}_2$
is orthogonal to the position vector ${\cal X}_2$ for any value of
$f_2$. Therefore, to find critical points that do not correspond to
trajectory intersections, it is enough to look for values of $f_1$
such that $\mathbf{r}_1\cdot\boldsymbol{\tau}_1=0$. By symmetry, we
can write
\[
{\cal X}_1 =
\begin{pmatrix}
  r_1\cos f_1\\
  r_1\sin f_1
\end{pmatrix},
\qquad\text{with}\quad r_1 = \frac{p_1}{1+e_1\cos f_1},
\]
that is we can assume $\omega_1 = 0$. Thus, up to a multiplicative
factor, we have
\[
\boldsymbol{\tau}_1 = 
\begin{pmatrix}
  -\sin f_1\\
  \cos f_1+e_1
\end{pmatrix},
\]
so that
\[
\mathbf{r}_1\cdot\boldsymbol{\tau}_1 = \frac{p_1e_1\sin f_1}{1+e_1\cos f_1} = 0
\]
is satisfied iff $f_1 = 0,\pi$. Therefore, in general, we have the
four critical points
\[
(\bar f_1,\bar f_2) = (0,0), (0,\pi), (\pi,0), (\pi,\pi).
\]
We may have at most two additional critical points that correspond to
trajectory intersections, see \cite[Sect. 7.1]{gt07}. In conclusion,
the maximum number of critical points with a circular and an elliptic
trajectory in the planar case is 6.

\medbreak We consider now the case of two ellipses. The position
vectors can be written as
\[
{\cal X}_1 =
\begin{pmatrix}
  r_1\cos f_1\\
  r_1\sin f_1
\end{pmatrix},
\qquad\text{with}\quad r_1 = \frac{p_1}{1+e_1\cos f_1}
\]
and
\[
{\cal X}_2 =
\begin{pmatrix}
  r_2\cos(f_2+\omega_2)\\
  r_2\sin(f_2+\omega_2)
\end{pmatrix},
\qquad\text{with}\quad r_2 = \frac{p_2}{1+e_2\cos f_2}.
\]
Up to a multiplicative factor, we have
\[
\bm{\tau}_1 =
\begin{pmatrix}
  -\sin f_1\\
  \cos f_1+e_1
\end{pmatrix},
\hskip 1cm
\bm{\tau}_2 =
\begin{pmatrix}
  -\sin(f_2+\omega_2)-e_2\sin\omega_2\\
  \hphantom{+}\cos(f_2+\omega_2)+e_2\cos\omega_2
\end{pmatrix}.
\]
The critical points that do not correspond to trajectory intersections
are given by the values of $f_1, f_2$ such that $\bm{\tau}_1,
\bm{\tau}_2$ are parallel and both orthogonal to ${\cal X}_2-{\cal
  X}_1$.  These two conditions lead to the system
\begin{empheq}[left=\empheqlbrace]{align}
  & (\cos f_1+e_1)[\sin(f_2+\omega_2)+e_2\sin\omega_2]-\sin f_1[\cos(f_2+\omega_2)+e_2\cos\omega_2] = 0,
  \label{eq:t1t2par}\\
  & (\cos f_1+e_1)[r_2\sin(f_2+\omega_2)-r_1\sin f_1]-\sin f_1[r_2\cos(f_2+\omega_2)-r_1\cos f_1] = 0.
  \label{eq:r21ort}
\end{empheq}
Multiplying \eqref{eq:t1t2par} by $r_2$ and subtracting
\eqref{eq:r21ort} we get
\begin{equation}
  p_1e_1(1+e_2\cos f_2)\sin f_1+p_2e_2(1+e_1\cos f_1)(\cos f_1\sin\omega_2-\sin f_1\cos\omega_2+e_1\sin\omega_2) = 0.
  \label{eqsubtract}
\end{equation}
Equations \eqref{eq:t1t2par} and \eqref{eqsubtract} can be written as
\begin{empheq}[left=\empheqlbrace]{align}
  & \alpha\cos f_2 + \beta\sin f_2 + \alpha e_2 = 0,
  \label{eq:planell2}\\
  & \mu\cos f_2 + \delta = 0,
  \label{eq:planell1}
\end{empheq}

\begin{table}[h!]
  \centering
  \begin{tabular}{ccccc}
    \hline
    \multicolumn{1}{c}{$q$}  & \multicolumn{1}{c}{$e$}  & \multicolumn{1}{c}{$i$}  & \multicolumn{1}{c}{$\Omega$}  & \multicolumn{1}{c}{$\omega$}\\
    \hline
    0.16582  & 0.84577  & 0  & 0  & 9.09466\\
    1        & 0.2      & 0  & 0  & 10\\
    \hline
  \end{tabular}
  \caption{Cometary orbital elements of a pair of coplanar elliptic
    orbits giving 10 critical points. Angles are in degrees.}
  \label{tab:10cpt}
\end{table}

\begin{figure}[h!]
  \centering
  \includegraphics[width=10cm]{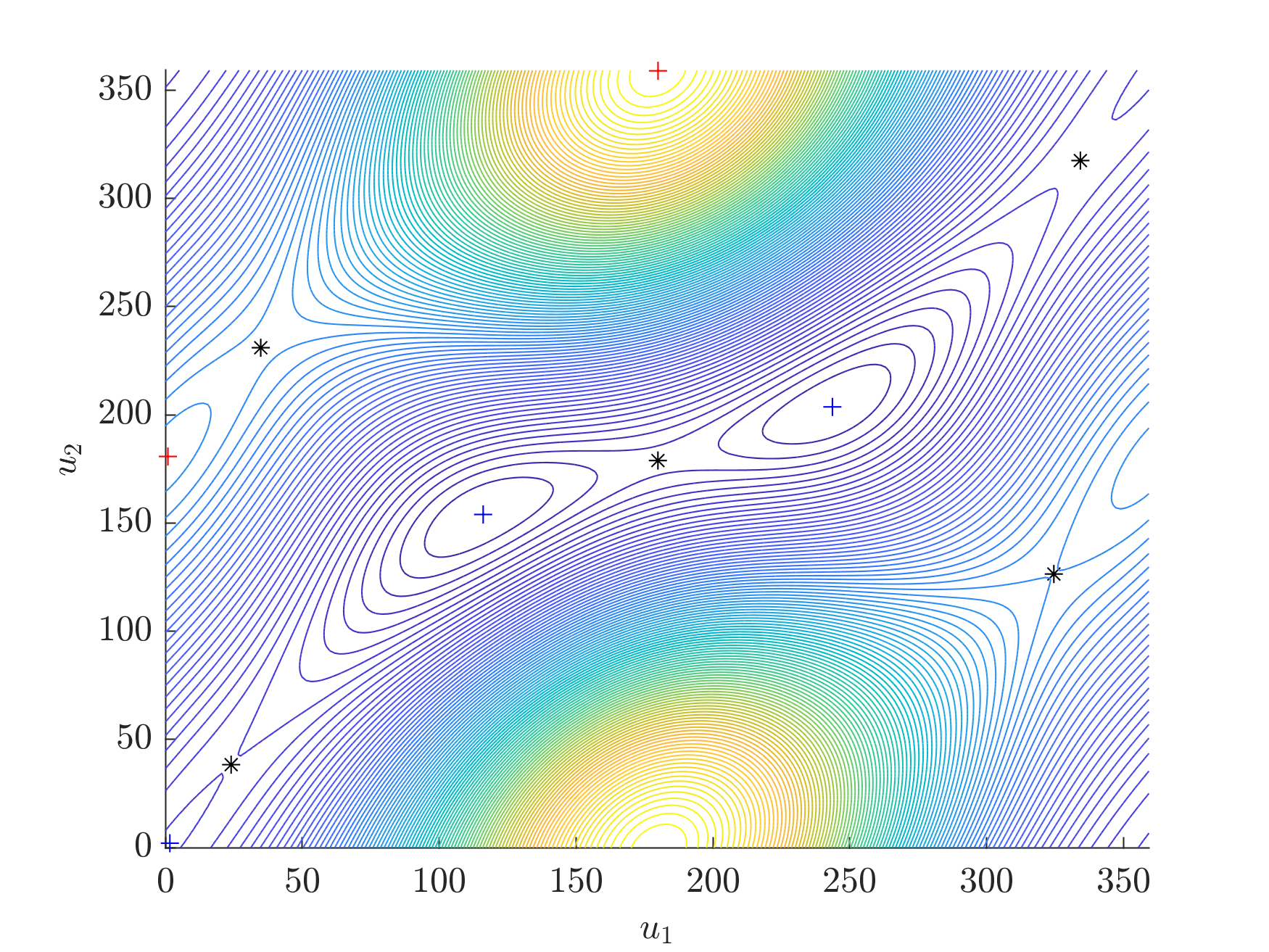}
  \caption{Level curves of the squared distance for the case reported
    in Table \ref{tab:10cpt}. The position of the critical points is
    highlighted: saddle points are represented by black asterisks,
    while the red and blue crosses correspond to maximum and minimum
    points, respectively.}
  \label{fig:lev_10cpt}
\end{figure}

\begin{table}[h!]
  \centering
  \begin{tabular}{d{3.7}|d{3.7}|c|c}
    \hline
    \multicolumn{1}{c}{$u_1$}  & \multicolumn{1}{c}{$u_2$}  & \multicolumn{1}{c}{$d$}  & \multicolumn{1}{c}{Type}\\
    \hline
    116.0625325  & 153.9899286  & 0.0000000  & MINIMUM\\
    243.6382848  & 203.6865581  & 0.0000000  & MINIMUM\\
    179.8948964  & 178.9198966  &  0.4845432  & SADDLE\\
    1.6247542  & 2.0946456  & 0.8341185  & MINIMUM\\
    24.0090191  & 38.3799855  & 0.8401907  & SADDLE\\
    334.2162041  & 317.5202237  & 0.8445898  & SADDLE\\
    324.5270438  & 126.4762243  & 1.6264123  & SADDLE\\
    34.8254033  & 231.0377067  & 1.6334795  & SADDLE\\
    0.9077692  & 180.7796090  & 1.6658557  & MAXIMUM\\
    179.9346562  & 358.9929507  & 2.9845260  & MAXIMUM\\
    \hline
  \end{tabular}
  \caption{Critical points and critical values for the pair of orbits
    displayed in Table \ref{tab:10cpt}.}
  \label{tab:cpt10}
\end{table}

where
\begin{align*}
  \alpha &= \sin(\omega_2-f_1)+e_1\sin\omega_2,\\
  \beta &= \cos(\omega_2-f_1)+e_1\cos\omega_2,\\  
  \mu &= p_1e_1e_2\sin f_1,\\
  \delta &= p_1e_1\sin f_1+p_2e_2(1+e_1\cos f_1)(\cos f_1\sin\omega_2-\sin f_1\cos\omega_2+e_1\sin\omega_2).
\end{align*}
From (\ref{eq:planell2}) we obtain
\[
\sin f_2 = -\frac{\alpha}{\beta}(\cos f_2+e_2),
\]
which is replaced into relation $\cos^2f_2+\sin^2f_2-1=0$ to give
\begin{equation}
  (\alpha^2+\beta^2)\cos^2f_2+2\alpha^2e_2\cos f_2+e_2^2\alpha^2-\beta^2 = 0.
  \label{eq:planell3}
\end{equation}
Moreover, after replacing in (\ref{eq:planell3})
\[
\cos f_2 = -\frac{\delta}{\mu},
\]
which follows from (\ref{eq:planell1}), we obtain
\begin{equation}
  (\alpha^2+\beta^2)\delta^2-2e_2\alpha^2\delta\mu+\mu^2(e_2^2\alpha^2-\beta^2) = 0.
  \label{eq:fineq}
\end{equation}
Since
\[
\alpha^2+\beta^2=1+e_1^2+2e_1\cos f_1,
\]
the trigonometric polynomial in (\ref{eq:fineq}) has, in general,
degree 5 in $\cos f_1$, $\sin f_1$. Therefore, we can not have more
than 10 critical points which do not correspond to trajectory
intersections. Then, the maximum number of critical points of $d^2$
(including intersections) for two elliptic orbits in the planar case
is at most 12. However, we remark that this bound has never been
reached in our numerical tests, where we got at most 10 critical
points that we think is the maximum number. This conjecture adds a new
question to Problem 8 in \cite{Albouy2012}.

In Table \ref{tab:10cpt} we write a set of orbital elements giving 10
critical points. We draw the level curves of $d^2$ in Figure
\ref{fig:lev_10cpt}, as function of the eccentric anomalies $u_1,
u_2$, where the position of each critical point is highlighted: we use
an asterisk for saddle points, and crosses for local extrema.
Finally, the critical points, the corresponding values of $d$ and
their type (minimum, maximum, saddle) are displayed in
Table~\ref{tab:cpt10}.

\section{Conclusions}
\label{s:conclusions}

In this work we investigate different approaches for the computation
of the critical points of the squared distance function $d^2$, with
particular care for the minimum values.
We focus on the case of bounded trajectories. Two algebraic approaches
are used: the first employs ordinary polynomials, the second
trigonometric polynomials. In both cases we detail all the steps to
reduce the problem to the computation of the roots of a univariate
polynomial of minimal degree (that is 16, in the general case). The
different methods are compared through numerical tests using the
orbits of all the known near-Earth asteroids. We also perform some
reliability tests of the results, which make use of known optimal
bounds on the orbit distance. Finally, we improve the theoretical
bound on the number of critical points in the planar case, and refine
the related conjecture.

\section{Acknowledgments}

We wish to thank Leonardo Robol for his useful comments and
suggestions.  The authors have been partially supported through the
H2020 MSCA ETN Stardust-Reloaded, Grant Agreement n. 813644. The
authors also acknowledge the project MIUR-PRIN 20178CJA2B ``New
frontiers of Celestial Mechanics: theory and applications'' and the
GNFM-INdAM (Gruppo Nazionale per la Fisica Matematica).

\appendix
\section{Coefficients of shifted polynomials for method with ordinary polynomials and eccentric anomalies}
\label{a:shift_ord}

The coefficients of system \eqref{polysys_shift} are
\begin{align*}
    \tilde\alpha &= \bigl( 2(A_1-A_3)\cos s_1\sin s_1 - A_{10}\sin s_1 - A_{11}\cos s_1 + A_{12}\sin s_1 + A_8\cos s_1\bigr)z^4\\
    &\quad\, + \bigl(-8(A_1-A_3)\cos^2s_1 + 2A_{10}\cos s_1 - 2A_{11}\sin s_1\bigr.\\
    &\quad\, \bigl.{} - 2 A_{12}\cos s_1 + 2 A_8\sin s_1 + 4(A_1-A_3) \bigr) z^3 - 12(A_1-A_3)\sin s_1 \cos s_1 z^2\\
    &\quad\, + \bigl(8(A_1-A_3)\cos^2s_1 + 2A_{10}\cos s_1 - 2A_{11}\sin s_1 - 2A_{12}\cos s_1 \bigr.\\
    &\quad\, \bigl.{} + 2A_8\sin s_1 - 4(A_1-A_3)\bigr)z + 2(A_1-A_3)\cos s_1\sin s_1\\
    &\quad\, + A_{10}\sin s_1 + A_{11}\cos s_1 - A_{12} \sin s_1 - A_8\cos s_1 ,\\[1ex]
    \tilde\beta &= \left(-2A_7\cos s_1 + 2A_9\sin s_1\right) z^4 + \left(-4A_7\sin s_1 - 4A_9\cos s_1\right)z^3\\
    &\quad\, + \left(-4 A_7\sin s_1 - 4 A_9\cos s_1 \right)z + 2A_7\cos s_1 - 2A_9\sin s_1,\\[1ex]
    \tilde\gamma &= \bigl(2(A_1-A_3)\cos s_1 \sin s_1 + A_{10}\sin s_1 - A_{11}\cos s_1 + A_{12}\sin s_1 - A_8\cos s_1\bigr)z^4 \\
    &\quad\, + \bigl(-8 (A_1-A_3)\cos^2 s_1 - 2A_{10}\cos s_1 - 2A_{11}\sin s_1 \bigr.\\
    &\quad\,\bigl.{} - 2A_{12}\cos s_1 - 2A_8\sin s_1 + 4(A_1-A_3)\bigr)z^3 - 12(A_1-A_3)\sin s_1\cos s_1 z^2\\
    &\quad\, + \bigl(8(A_1-A_3)\cos^2 s_1 - 2 A_{10}\cos s_1 - 2 A_{11}\sin s_1 - 2 A_{12}\cos s_1 \bigr.\\
    &\quad\,\bigl. {} -2A_8\sin s_1 - 4(A_1-A_3)\bigr)z + 2(A_1-A_3)\cos s_1 \sin s_1\\
    &\quad\, - A_{10}\sin s_1 + A_{11}\cos s_1 - A_{12}\sin s_1 + A_8\cos s_1 ,\\[1ex]
    \tilde A &= \left(A_7\sin s_1 + A_9\cos s_1 - A_{13} \right)z^2 + \left(-2 A_7\cos s_1 + 2 A_9\sin s_1\right)z\\
    &\quad\, - A_7\sin s_1 - A_9\cos s_1 - A_{13} ,\\[1ex]
    \tilde B &= \bigl(2A_{10}\cos s_1 + 2A_8\sin s_1 - 2A_{14} - 4(A_4-A_6)\bigr) z^2 + \left(4A_{10}\sin s_1 - 4A_8\cos s_1\right)z\\
    &\quad\, - 2A_{10}\cos s_1 - 2A_8\sin s_1 - 2 A_{14} - 4(A_4-A_6) ,\\[1ex]
    \tilde D &= \bigl(2A_{10}\cos s_1 + 2A_8\sin s_1 - 2A_{14} + 4(A_4-A_6)\bigr) z^2 + \left(4A_{10}\sin s_1 - 4 A_8\cos s_1\right)z\\
    &\quad\, - 2 A_{10}\cos s_1 - 2 A_8\sin s_1 - 2 A_{14} + 4(A_4-A_6) .
\end{align*}

\section{Factorization of $\mathscr{H}(f_2)$}
\label{a:true_anom}
Let
\[
\mathscr{T} =
\left[
  \begin{array}{cccc}
    \alpha^2+\beta^2  & 0  & \beta\kappa-\alpha\lambda  & 0\cr
    2\alpha\gamma  & \alpha^2+\beta^2  & \beta\mu-\lambda\gamma-\alpha\nu  & \beta\kappa-\alpha\lambda\cr
    \gamma^2-\beta^2  & 2\alpha\gamma  & \beta\kappa-\gamma\nu  & \beta\mu-\lambda\gamma-\alpha\nu\cr
    0  & \gamma^2-\beta^2  & 0  & \beta\kappa-\gamma\nu\cr
  \end{array}
  \right]
\]
and define
\[
\mathscr{H}(f_2) = \det\mathscr{T}.
\]

\begin{proposition}
  We can extract the factor $\beta^2(1+e_2\cos f_2)^2$ from
  $\det\mathscr{T}$.
\end{proposition}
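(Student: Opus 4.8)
The plan is to mirror the two-stage structure of the proof of Proposition~\ref{prop:ea_factor}. Write $c=\cos f_1$; the two polynomials in \eqref{cosVeqs} can be put in the form
\[
\Phi_1(c)=(\alpha c+\gamma)^2+\beta^2(c^2-1),\qquad
\Phi_2(c)=-(\alpha c+\gamma)(\lambda c+\nu)+\beta\bigl[\kappa(c^2+1)+\mu c\bigr],
\]
and $\mathscr{H}=\det\mathscr{T}=\mathrm{Res}_c(\Phi_1,\Phi_2)=\mathrm{lc}(\Phi_1)^2\,\Phi_2(\sigma_+)\Phi_2(\sigma_-)$, where $\sigma_\pm$ are the roots of $\Phi_1$. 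At such a root, $(\alpha\sigma+\gamma)^2=\beta^2(1-\sigma^2)$, hence $\alpha\sigma+\gamma$ is $\beta$ times an element that is integral over the coefficient ring, so both summands of $\Phi_2(\sigma)$ are divisible by $\beta$. Consequently $\Phi_2(\sigma_+)\Phi_2(\sigma_-)$ — a symmetric function of $\sigma_\pm$, hence, after multiplication by $\mathrm{lc}(\Phi_1)^2$, a polynomial in the coefficients — is divisible by $\beta^2$. This recovers the first factor, $\beta^2\mid\mathscr{H}$ (equivalently, one may expand $\det\mathscr{T}$ in powers of $\beta$ and check that the $\beta^0$ and $\beta^1$ parts vanish, exactly as in Proposition~\ref{prop:ea_factor}); in particular $\tilde\beta^2\mid\mathscr{H}$ since $\beta=\xi\tilde\beta$.

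For the extra factor $(1+e_2\cos f_2)^2$ set $\xi=1+e_2\cos f_2$ and use the relations $\beta=\xi\tilde\beta$, $\alpha=e_1\gamma+\xi\tilde\alpha$, $\lambda=e_1\tilde\lambda$, $\nu=\tilde\lambda+p_1e_1\xi$, $\kappa=\eta\mu$. Modulo $\xi$ one gets $\Phi_1\equiv\gamma^2(1+e_1c)^2$ and $\Phi_2\equiv-\gamma\tilde\lambda(1+e_1c)^2$: both reduce to the same perfect square of the linear form $1+e_1c$. In particular $\Phi_1$ acquires a double root at $c=-1/e_1$ when $\xi=0$; moreover $\mathrm{disc}_c(\Phi_1)=4\beta^2(\alpha^2-\gamma^2+\beta^2)=4\xi^2\tilde\beta^2(\alpha^2-\gamma^2+\beta^2)$ carries $\xi^2$, so that double root splits only to first order, i.e. $\sigma_\pm=-1/e_1+O(\xi)$. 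Writing $\Phi_2(c)=-\gamma\tilde\lambda(1+e_1c)^2+\xi\,\Psi(c)+O(\xi^2)$, the first term, evaluated at $\sigma_\pm$, is $O(\xi^2)$; and $\Psi(-1/e_1)$ turns out to be a (generically nonzero) multiple of $\kappa(1+e_1^2)-e_1\mu$, which vanishes precisely because $\kappa=\eta\mu$. Hence $\Phi_2(\sigma_\pm)=O(\xi^2)$, so $\mathscr{H}=\mathrm{lc}(\Phi_1)^2\,\Phi_2(\sigma_+)\Phi_2(\sigma_-)=O(\xi^4)$. Since $\tilde\beta^2\mid\mathscr{H}$ and $\xi$ and $\tilde\beta$ are generically coprime, we conclude $\xi^4\tilde\beta^2=\xi^2\beta^2\mid\mathscr{H}$, i.e. $(1+e_2\cos f_2)^2\beta^2\mid\det\mathscr{T}$.

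Alternatively — and this is the route I would write out in detail — one carries out the first stage concretely to get $\det\mathscr{T}=\beta^2R(f_2)$ with $R$ explicit, then substitutes the five relations above into $R$ and factors $(1+e_2\cos f_2)^2$ out of it by inspection; the quotient is the polynomial $h(f_2)$ of \eqref{truean_hpoly}, and a degree count gives $\deg h=8$ (hence $\deg\det\mathscr{T}=14$), which is a useful consistency check. I expect the only genuinely delicate point to be the bookkeeping behind the identity $\Psi(-1/e_1)=0$: it is the relation $\kappa=\eta\mu$ that upgrades the order of vanishing of $\mathscr{H}$ at $1+e_2\cos f_2=0$ from $2$ — already forced by the rank drop of $\mathscr{T}$ modulo $\xi$ — to the required $4$. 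Making the order-of-vanishing argument rigorous requires a little care (working in the ring of trigonometric polynomials in $f_2$ and, if one argues through the roots $\sigma_\pm$, allowing the square roots they involve), but involves no new idea; the first stage is longer than in Proposition~\ref{prop:ea_factor} merely because $\mathscr{T}$ has more entries containing $\beta$.
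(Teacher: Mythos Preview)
Your argument is correct and takes a genuinely different route from the paper. The paper proceeds entirely by explicit computation: for the factor $\beta^2$ it expands $\det\mathscr{T}$ in powers of $\beta$ and verifies that the $\beta^0$ and $\beta^1$ contributions vanish (reusing verbatim the determinant identities from Proposition~\ref{prop:ea_factor}, so this step is in fact \emph{not} longer than before---your closing remark is slightly off); it then writes $\mathscr{H}/\beta^2=\mathfrak{D}_1+\cdots+\mathfrak{D}_5$ with each $\mathfrak{D}_i$ given explicitly, substitutes the relations $\alpha=\xi\tilde\alpha+e_1\gamma$, $\beta=\xi\tilde\beta$, $\lambda=e_1\tilde\lambda$, $\nu=\tilde\lambda+p_1e_1\xi$, and checks piece by piece that $\xi^2$ divides each $\mathfrak{D}_i/\beta^2$.

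Your approach via the product formula $\mathscr{H}=\mathrm{lc}(\Phi_1)^2\Phi_2(\sigma_+)\Phi_2(\sigma_-)$ is more conceptual and has the merit of explaining \emph{why} the factorization holds: the $\beta^2$ comes from the shape $\Phi_2=-(\alpha c+\gamma)(\lambda c+\nu)+\beta[\ldots]$ together with $(\alpha\sigma+\gamma)^2=\beta^2(1-\sigma^2)$ at the roots, while the upgrade from $\xi^2$ to $\xi^4$ is pinned precisely on $\kappa=\eta\mu$ through the vanishing of $\Psi(-1/e_1)$---a point that remains invisible in the paper's term-by-term verification. On the other hand, the paper's computation immediately delivers the explicit quotient $h(f_2)$ of \eqref{truean_hpoly}, which is what the algorithm actually uses; your order-of-vanishing argument establishes divisibility but does not by itself produce $h$. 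Your ``alternative'' in the final paragraph is essentially the paper's proof, so the two routes converge there.
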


\begin{proof}
  We first prove that we can extract the factor $\beta^2$.

  \noindent Noting that
  \begin{equation}
    \kappa = \mu\eta,\qquad \eta=\frac{e_1}{1+e_1^2},
    \label{eq:tau}
  \end{equation}
  we consider
  \[
  \mathscr{T} =
  \left[
    \begin{array}{cccc}
      \alpha^2+\beta^2  & 0  & \beta\mu\eta-\alpha\lambda  & 0\cr
      2\alpha\gamma  & \alpha^2+\beta^2  & \beta\mu-\lambda\gamma-\alpha\nu  & \beta\mu\eta-\alpha\lambda\cr
      \gamma^2-\beta^2  & 2\alpha\gamma  & \beta\mu\eta-\gamma\nu  & \beta\mu-\lambda\gamma-\alpha\nu\cr
      0  & \gamma^2-\beta^2  & 0  & \beta\mu\eta-\gamma\nu\cr
    \end{array}
    \right].
  \]
  We can write $\det\mathscr{T}$ as a sum of terms where the only one
  that is independent on $\beta$ is
  \[
  \left|
  \begin{array}{cccc}
    \alpha^2  & 0  & -\alpha\lambda  & 0\cr
    2\alpha\gamma  & \alpha^2  & -\lambda\gamma-\alpha\nu  & -\alpha\lambda\cr
    \gamma^2  & 2\alpha\gamma  & -\gamma\nu  & -\lambda\gamma-\alpha\nu\cr
    0  & \gamma^2  & 0  & -\gamma\nu\cr
  \end{array}
  \right|,
  \]
  which is equal to $0$, as previously proved at the beginning of
  Proposition \ref{prop:ea_factor}. The terms that are linearly
  dependent on $\beta$ are given by
  \[
  \left|
  \begin{array}{cccc}
    \alpha^2  & 0  & -\alpha\lambda  & 0\cr
    2\alpha\gamma  & \alpha^2  & -\lambda\gamma-\alpha\nu  & \beta\mu\eta\cr
    \gamma^2  & 2\alpha\gamma  & -\gamma\nu  & \beta\mu\cr
    0  & \gamma^2  & 0  & \beta\mu\eta\cr
  \end{array}
  \right|,
  \hskip 1cm
  \left|
  \begin{array}{cccc}
    \alpha^2  & 0  & \beta\mu\eta  & 0\cr
    2\alpha\gamma  & \alpha^2  & \beta\mu  & -\alpha\lambda\cr
    \gamma^2  & 2\alpha\gamma  & \beta\mu\eta  & -\lambda\gamma-\alpha\nu\cr
    0  & \gamma^2  & 0  & -\gamma\nu\cr
  \end{array}
  \right|,
  \]
  and their sum is equal to $0$, because the two determinants are
  opposite. Therefore, $\mathscr{H}(f_2)$ is made by terms of degree
  higher than 1 in $\beta$. Thus, we can write
  \[
  \mathscr{H}(f_2) = \mathfrak{D}_1 + \mathfrak{D}_2 + \mathfrak{D}_3 + \mathfrak{D}_4 + \mathfrak{D}_5,
  \]
  where
  \[
  \mathfrak{D}_1 =
  \left|
  \begin{array}{cccc}
    \beta^2  & 0  & \beta\mu\eta  & 0\cr
    0  & \beta^2  & \beta\mu  & \beta\mu\eta\cr
    -\beta^2  & 0  & \beta\mu\eta  & \beta\mu\cr
    0  & -\beta^2  & 0  & \beta\mu\eta\cr
  \end{array}
  \right|,
  \]
  \[
  \mathfrak{D}_2 =
  \left|
  \begin{array}{cccc}
    \beta^2  & 0  & \beta\mu\eta  & 0\cr
    0  & \beta^2  & \beta\mu  & -\alpha\lambda\cr
    -\beta^2  & 0  & \beta\mu\eta  & -\lambda\gamma-\alpha\nu\cr
    0  & -\beta^2 & 0  & -\gamma\nu\cr
  \end{array}
  \right|
  +
  \left|
  \begin{array}{cccc}
    \beta^2  & 0  & -\alpha\lambda  & 0\cr
    0  & \beta^2  & -\lambda\gamma-\alpha\nu  & \beta\mu\eta\cr
    -\beta^2  & 0  & -\gamma\nu  & \beta\mu\cr
    0  & -\beta^2  & 0  & \beta\mu\eta\cr
  \end{array} 
  \right|,
  \]
  \begin{align*}
    \mathfrak{D}_3 &=
    \left|
    \begin{array}{cccc}
      \beta^2  & 0  & -\alpha\lambda  & 0\cr
      0  & \beta^2  & -\lambda\gamma-\alpha\nu  & -\alpha\lambda\cr
      -\beta^2  & 0  & -\gamma\nu  & -\lambda\gamma-\alpha\nu\cr
      0  & -\beta^2  & 0  & -\gamma\nu\cr
    \end{array}
    \right|
    +
    \left|
    \begin{array}{cccc}
      \alpha^2  & 0  & \beta\mu\eta  & 0\cr
      2\alpha\gamma  & \beta^2  & \beta\mu  & \beta\mu\eta\cr
      \gamma^2  & 0  & \beta\mu\eta  & \beta\mu\cr
      0  & -\beta^2  & 0  &\beta\mu\eta\cr
    \end{array}
    \right|
    \\
    & \quad\,
    +\left|
    \begin{array}{cccc}
      \beta^2  & 0  & \beta\mu\eta  & 0\cr
      0  & \alpha^2  & \beta\mu  & \beta\mu\eta\cr
      -\beta^2  & 2\alpha\gamma  & \beta\mu\eta  & \beta\mu\cr
      0  & \gamma^2   & 0  & \beta\mu\eta\cr
    \end{array}
    \right|,
  \end{align*}
  \begin{align*}
    \mathfrak{D}_4 &=
    \left|
    \begin{array}{cccc}
      \alpha^2  & 0  & \beta\mu\eta  & 0\cr
      2\alpha\gamma  & \beta^2  & \beta\mu  & -\alpha\lambda\cr
      \gamma^2  & 0  & \beta\mu\eta  & -\lambda\gamma-\alpha\nu\cr
      0  & -\beta^2  & 0  & -\gamma\nu\cr
    \end{array}
    \right|
    +
    \left|
    \begin{array}{cccc}
      \alpha^2  & 0  & -\alpha\lambda  & 0\cr
      2\alpha\gamma  & \beta^2  & -\lambda\gamma-\alpha\nu  &\beta\mu\eta\cr
      \gamma^2  & 0  & -\gamma\nu  & \beta\mu\cr
      0  & -\beta^2  & 0  & \beta\mu\eta\cr
    \end{array}
    \right|
    \\
    & \quad\,
    +\left|
    \begin{array}{cccc}
      \beta^2  & 0  & \beta\mu\eta  & 0\cr
      0  & \alpha^2  & \beta\mu  & -\alpha\lambda\cr
      -\beta^2  & 2\alpha\gamma  & \beta\mu\eta  & -\lambda\gamma-\alpha\nu\cr
      0  & \gamma^2  & 0  & -\gamma\nu\cr
    \end{array}
    \right|
    +
    \left|
    \begin{array}{cccc}
      \beta^2  & 0  & -\alpha\lambda  & 0\cr
      0  & \alpha^2  & -\lambda\gamma-\alpha\nu  & \beta\mu\eta\cr
      -\beta^2  & 2\alpha\gamma  & -\gamma\nu  & \beta\mu\cr
      0  & \gamma^2  & 0  & \beta\mu\eta\cr
    \end{array}
    \right|,
  \end{align*}
  \begin{align*}
    \mathfrak{D}_5 &=
    \left|
    \begin{array}{cccc}
      \alpha^2  & 0  & -\alpha\lambda  & 0\cr
      2\alpha\gamma  & \beta^2  & -\lambda\gamma-\alpha\nu  & -\alpha\lambda\cr
      \gamma^2  & 0  & -\gamma\nu  & -\lambda\gamma-\alpha\nu\cr
      0  & -\beta^2  & 0  & -\gamma\nu\cr
    \end{array}
    \right|
    +
    \left|
    \begin{array}{cccc}
      \beta^2  & 0  & -\alpha\lambda  & 0\cr
      0  & \alpha^2  & -\lambda\gamma-\alpha\nu  & -\alpha\lambda\cr
      -\beta^2  & 2\alpha\gamma  & -\gamma\nu  & -\lambda\gamma-\alpha\nu\cr
      0  & \gamma^2  & 0  & -\gamma\nu\cr
    \end{array}
    \right|
    \\
    & \quad\, + 
    \left|
    \begin{array}{cccc}
      \alpha^2  & 0  & \beta\mu\eta-\alpha\lambda  & 0\cr
      2\alpha\gamma  & \alpha^2  & \beta\mu-\lambda\gamma-\alpha\nu  & \beta\mu\eta-\alpha\lambda\cr
      \gamma^2  & 2\alpha\gamma  & \beta\mu\eta-\gamma\nu  & \beta\mu-\lambda\gamma-\alpha\nu\cr
      0  & \gamma^2  & 0  & \beta\mu\eta-\gamma\nu\cr
    \end{array}
    \right|.
  \end{align*}
  We have
  \begin{align*}
    \mathfrak{D}_1 &= \beta^6\mu^2(4\eta^2-1),\\[1ex]
    \mathfrak{D}_2 &= 2\beta^5\mu\left[\lambda\gamma+\alpha\nu-2\eta(\alpha\lambda+\gamma\nu)\right],\\[1ex]
    \mathfrak{D}_3 &= \beta^4(\alpha^2-\gamma^2)\left[\lambda^2-\nu^2+\mu^2(4\eta^2-1)\right],\\[1ex]
    \mathfrak{D}_4 &= -2\beta^3\mu\left[\gamma^3\lambda-\alpha^3\nu+\eta(\alpha^3\lambda-3\alpha\gamma^2\lambda+
      3\alpha^2\gamma\nu-\gamma^3\nu)\right],\\[1ex]
    \mathfrak{D}_5 &= \beta^2\mu^2\left[\alpha\gamma-\eta(\alpha^2+\gamma^2)\right]^2-\beta^2(\alpha^2-\gamma^2)
    (\lambda\gamma-\alpha\nu)^2.
  \end{align*}

  \begin{remark}
    $\mathscr{H}(f_2)/\beta^2$ is a trigonometric polynomial of degree
    10 in $(\cos f_2$, $\sin f_2)$.
  \end{remark}

  Then, we show that $\xi^2 = (1+e_2\cos f_2)^2$ is a factor of
  $\mathscr{H}(f_2)/\beta^2$. For this purpose, using the definitions
  of $\alpha$, $\beta$, $\nu$, $\lambda$, $\tilde{\alpha}$,
  $\tilde{\beta}$, $\tilde{\lambda}$ given in
  Section~\ref{s:true_anom}, we write
  \begin{align}
    \alpha &= \xi\tilde{\alpha}+e_1\gamma,\label{eq:alpha}\\
    \beta &= \xi\tilde{\beta},\nonumber\\
    \lambda &= e_1\tilde{\lambda},\label{eq:lambda}\\
    \nu &= p_1e_1\xi +\tilde{\lambda}\label{eq:nu}.
  \end{align}
  The factor $\beta^2$ can be extracted from
  $(\mathfrak{D}_1+\mathfrak{D}_2+\mathfrak{D}_3)/\beta^2$, therefore
  also $\xi^2$ is a factor of this polynomial. Consider now
  $\mathfrak{D}_4/\beta^2$ and write it as
  \[
  \mathfrak{D}_4/\beta^2 = 6\mu\beta\alpha\gamma(\gamma\lambda-\alpha\nu)-2\mu\beta
  \left[\lambda(\gamma^3+\eta\alpha^3)-\nu(\alpha^3+\eta\gamma^3)\right].
  \]
  Noting that
  \begin{equation}
    \gamma\lambda-\alpha\nu = -\xi(\nu\tilde{\alpha}+p_1e_1^2\gamma)
    \label{eq:aux}
  \end{equation}
  and
  \[
  \gamma^3\left[\lambda(1+\eta e_1^3)-\nu(e_1^3+\eta)\right] = -\gamma^3(1+\eta e_1^3)\xi p_1e_1^2,
  \]
  where we used the expressions of $\alpha$, $\lambda$, $\nu$ in
  \eqref{eq:alpha}, \eqref{eq:lambda}, \eqref{eq:nu} and the
  definition of $\eta$ in \eqref{eq:tau}, we prove that $\xi^2$
  factors $\mathfrak{D}_4/\beta^2$.

  Finally, using relation \eqref{eq:aux} and
  \[
  \alpha\gamma-\eta(\alpha^2+\gamma^2) =
  \xi\tilde{\alpha}\left[\gamma(1-2\eta e_1)-\xi\eta\tilde{\alpha}\right],
  \]
  we show that also $\mathfrak{D}_5/\beta^2$ contains the factor $\xi^2$.
\end{proof}

The trigonometric polynomial
\begin{align*}
  h(f_2) &= \mathscr{H}(f_2)/(\beta^2\xi^2)\\
  &= \tilde{\beta}^4\xi^2\mu^2(4\eta^2-1)
  +2\tilde{\beta}^3\xi\mu\left[\lambda\gamma+\alpha\nu-2\eta(\alpha\lambda+\gamma\nu)\right]\\
  &\,\quad +\tilde{\beta}^2(\alpha^2-\gamma^2)\left[\lambda^2-\nu^2+\mu^2(4\eta^2-1)\right]\\
  &\,\quad -2\mu\tilde{\beta}\tilde{\alpha}^2\xi^2\left[\tilde{\alpha}(\lambda\eta-\nu)-3\eta e_1^3 p_1\gamma\right]
  + \mu^2\tilde{\alpha}^2\left[\gamma(1-2\eta e_1)-\xi\eta\tilde{\alpha}\right]^2\\
  &\,\quad -2\mu\tilde{\beta}p_1e_1(1-e_1^2)\eta\gamma^2\left[3e_1\tilde{\alpha}\xi-\gamma(1-e_1^2)\right]
  -(\alpha^2-\gamma^2)(\nu\tilde{\alpha}+p_1e_1^2\gamma)^2
\end{align*}
is of degree 8 in $(\cos f_2$, $\sin f_2)$.

\section{Angular shift for trigonometric polynomials}
\label{a:shift}
Let
\[
p(x,y) = \sum_{(i,j)\in {\cal K}}p_{i,j} x^i y^j,
\]
with $x = \cos u, y = \sin u$ and ${\cal K}\subset\N\times\N$
(non-negative 2-index integers) be a trigonometric polynomial.  We
wish to write $p(x,y)$ in terms of the variables $(z,w)$, where
\[
z = \cos v,\quad w = \sin v, \qquad v = u - \alpha, \quad \alpha\in \R.
\]
Writing $c_\alpha$, $s_\alpha$ for $\cos\alpha$, $\sin\alpha$,
respectively, we have
\[
x = c_\alpha z - s_\alpha w, \qquad y = s_\alpha z + c_\alpha w,
\]
so that we obtain
\[
\begin{split}
  q(z,w) &= p(c_\alpha z - s_\alpha w,s_\alpha z + c_\alpha w)\\
  & = \sum_{(i,j)\in {\cal K}}p_{i,j}\left[\sum_{h=0}^i\biggl(\begin{array}{c}i\cr h\end{array}\biggr)(c_\alpha z)^h(-s_\alpha w)^{i-h}\right]
  \left[\sum_{k=0}^j\biggl(\begin{array}{c}j\cr k\end{array}\biggr)(s_\alpha z)^k(c_\alpha w)^{j-k}\right]\\
  & = \sum_{(i,j)\in {\cal K}}p_{i,j}\sum_{\ell=0}^{i+j}\sum_{h+k=\ell}\biggl(\begin{array}{c}i\cr h\end{array}\biggr)
  \biggl(\begin{array}{c}j\cr k\end{array}\biggr)(-1)^{i-h} (c_\alpha)^{h+j-k}(s_\alpha)^{i-h+k} z^{h+k} w^{i-h+j-k}\\
  & = \sum_{(i,j)\in {\cal K}}p_{i,j}\sum_{\ell=0}^{i+j}\sum_{h=\max\{\ell-j,0\}}^{\min\{\ell,i\}}\biggl(\begin{array}{c}i\cr h\end{array}\biggr)
  \biggl(\begin{array}{c}j\cr \ell-h\end{array}\biggr)(-1)^{i-h} (c_\alpha)^{2h+j-\ell}(s_\alpha)^{i-2h+\ell} z^\ell w^{i+j-\ell}. 
\end{split}
\]
If
\[
{\cal K} = \{0,1,\ldots,m\}\times \{0,1,\ldots,n\},
\]
introducing the coefficients
\[
C_{i,j,\ell} = \sum_{h=\max\{\ell-j,0\}}^{\min\{\ell,i\}}\biggl(\begin{array}{c}i\cr h\end{array}\biggr)\biggl(\begin{array}{c}j\cr \ell - h\end{array}\biggr)
(-1)^{i-h} (c_\alpha)^{2h+j-\ell}(s_\alpha)^{i-2h+\ell},
\]
we can write
\[
\begin{split}
q(z,w) &= \sum_{i=0}^m\sum_{j=0}^n p_{i,j}\sum_{\ell=0}^{i+j}C_{i,j,\ell} z^\ell w^{i+j-\ell} = \sum_{r = 0}^{m+n} \sum_{i=\max\{r-n,0\}}^{\min\{r,m\}} p_{i,r-i}
\sum_{\ell=0}^{\cancel{i}+r\cancel{-i}}C_{i,r-i,\ell} z^\ell w^{\cancel{i}+r\cancel{-i}-\ell}\\
&= \sum_{r = 0}^{m+n}\sum_{\ell=0}^r q_{\ell,r-\ell} z^\ell w^{r-\ell},
\end{split}
\]
where
\[
q_{\ell,r-\ell} = \sum_{i=\max\{r-n,0\}}^{\min\{r,m\}} p_{i,r-i}\, C_{i,r-i,\ell}.
\]

\bibliography{mybib}{}

\begin{thebibliography}{10}

\bibitem{Albouy2012}
A.~{Albouy}, H.~E. {Cabral}, and A.~A. {Santos}.
\newblock {Some problems on the classical N-body problem}.
\newblock {\em Celestial Mechanics and Dynamical Astronomy}, 113(4):369--375,
  2012.

\bibitem{Arroyo2021}
C.~A. {Arroyo-Parejo}, N.~{Sanchez-Ortiz}, and R.~{Dominguez-Gonzalez}.
\newblock {Effect of mega-constellations on collision risk in space}.
\newblock In {\em 8th European Conference on Space Debris}. ESA Space Debris
  Office, 2021.

\bibitem{balkhol2005}
R.~V. {Baluyev} and K.~V. {Kholshevnikov}.
\newblock {Distance Between Two Arbitrary Unperturbed Orbits}.
\newblock {\em Celestial Mechanics and Dynamical Astronomy}, 91(3-4):287--300,
  2005.

\bibitem{bini}
D.~A. {Bini}.
\newblock {Numerical computation of polynomial zeros by means of Aberth
  method.}
\newblock {\em Numer. Algorithms}, 13:179--200, 1997.

\bibitem{casrob2021}
A.~{Casulli} and L.~{Robol}.
\newblock {Rank-structured QR for Chebyshev rootfinding}.
\newblock {\em SIAM jour. of Matr. Anal. and Appl.}, 42(3):1148--1171, 2021.

\bibitem{cox92}
D.~{Cox}, J.~{Little}, and D.~{O'Shea}.
\newblock {\em {Ideals, Varieties, and Algorithms}}.
\newblock Springer-Verlag, 1992.

\bibitem{dybczyetal1986}
P.~A. {Dybczynski}, T.~J. {Jopek}, and R.~A. {Serafin}.
\newblock {On the minimum distance between two Keplerian orbits with a common
  focus}.
\newblock {\em Cel. Mech. Dyn. Ast.}, 38:345--356, 1986.

\bibitem{good1961}
I.~{Good}.
\newblock {The colleague matrix, a Chebyshev analogue of the companion matrix}.
\newblock {\em Quart. J. Math. Oxford Ser. (2)}, 12:61--68, 1961.

\bibitem{gronchi2002}
G.~F. {Gronchi}.
\newblock {On the stationary points of the squared distance between two
  ellipses with a common focus}.
\newblock {\em SIAM Jour. Sci. Comp.}, 4(1):61--80, 2002.

\bibitem{gronchi2005}
G.~F. {Gronchi}.
\newblock {An algebraic method to compute the critical points of the distance
  function between two Keplerian orbits}.
\newblock {\em Cel. Mech. Dyn. Ast.}, 93(1):297--332, 2005.

\bibitem{gn20}
G.~F. {Gronchi} and L.~{Niederman}.
\newblock {On the nodal distance between two Keplerian trajectories with a
  common focus}.
\newblock {\em Cel. Mech. Dyn. Ast.}, 132(5):1--29, 2020.

\bibitem{gt07}
G.~F. {Gronchi} and G.~{Tommei}.
\newblock On the uncertainty of the minimal distance between two confocal
  keplerian orbits.
\newblock {\em Discrete and Continuous Dynamical Systems - B}, 7(4):755--778,
  2007.

\bibitem{gv13}
G.~F. {Gronchi} and G.~B. {Valsecchi}.
\newblock {On the possible values of the orbit distance between a near-Earth
  asteroid and the Earth}.
\newblock {\em Monthly Not. Royal. Astron. Soc.}, 429:2687--2699, 2013.

\bibitem{hedoetal2018}
J.~M. {Hedo}, M.~{Ru{\'\i}z}, and J.~{Pel{\'a}ez}.
\newblock {On the minimum orbital intersection distance computation: a new
  effective method}.
\newblock {\em MNRAS}, 479(3):3288--3299, 2018.

\bibitem{kholvass1999}
K.~V. {Kholshevnikov} and N.~N. {Vassiliev}.
\newblock {On the distance function between two Keplerian elliptic orbits}.
\newblock {\em Cel. Mech. Dyn. Ast.}, 75:75--83, 1999.

\bibitem{milaniAstIII}
A.~{Milani}, S.~R. {Chesley}, P.~W. {Chodas}, and G.~B. {Valsecchi}.
\newblock {Asteroid Close Approaches: Analysis and Potential Impact Detection}.
\newblock In W.~F. {Bottke}, {Cellino} A., {Paolicchi} P., and {Binzel}~R. P.,
  editors, {\em {ASTEROIDS III}}, pages 55--69. Arizona University Press, 2001.

\bibitem{milaniLoV2005}
A.~{Milani}, S.~R. {Chesley}, M.~E. {Sansaturio}, G.~{Tommei}, and G.~B.
  {Valsecchi}.
\newblock {Nonlinear impact monitoring: line of variation searches for
  impactors}.
\newblock {\em Icarus}, 173(2):362--384, 2005.

\bibitem{nofper2017}
V.~{Noferini} and J.~{P\'erez}.
\newblock {Chebyshev rootfinding via computing eigenvalues}.
\newblock {\em Math. of Comp.}, 86(306):1741--1767, 2017.

\bibitem{rossi_2020}
A.~{Rossi}, A.~{Petit}, and D.~{McKnight}.
\newblock {Short-term space safety analysis of LEO constellations and
  clusters}.
\newblock {\em Acta Astronautica}, 175:476--483, 2020.

\bibitem{sitarski1968}
G.~{Sitarski}.
\newblock {Approaches of the Parabolic Comets to the Outer Planets}.
\newblock {\em Acta Astron.}, 18(2):171--195, 1968.

\bibitem{wisrick13}
T.~{Wisniowski} and H.~{Rickman}.
\newblock {Fast Geometric Method for Calculating Accurate Minimum Orbit
  Intersection Distances (MOIDs)}.
\newblock {\em Acta Astronomica}, 63:293--307, 2013.

\end{thebibliography}
\bibliographystyle{plain}

\end{document}